\documentclass[lettersize,journal]{./styl_files/IEEEtran}

\usepackage{balance}

\usepackage{amstext}
\usepackage{amsmath}
\usepackage{amsfonts}
\usepackage{amsthm}
\usepackage{multicol}
\usepackage{placeins}
\usepackage{orcidlink}
\newtheorem{theorem}{Theorem}
\newtheorem{lemma}{Lemma}

\newenvironment{proofsketch}{\proof}{\endproof}

\makeatletter
\AtBeginDocument{%
	\def\doi#1{\url{https://doi.org/#1}}}
\makeatother

\usepackage{stmaryrd}
\usepackage{etoolbox}
\usepackage{amsmath,amsthm}
\usepackage{amsbsy}
\usepackage{amsfonts}

\usepackage{tcolorbox}
\definecolor{darkblue}{rgb}{0,0,0.6}
\definecolor{darkgreen}{rgb}{0,0.5,0}
\definecolor{maroon}{rgb}{0.5,0.1,0.1}
\definecolor{dpurple}{rgb}{0.2,0,0.65}
\definecolor{darkviolet}{RGB}{130,95,141}
\definecolor{darkkhaki}{RGB}{189,183,107}
\definecolor{airforceblue}{rgb}{0.36, 0.54, 0.66}
\definecolor{ceruleanblue}{rgb}{0.16, 0.32, 0.75}
\definecolor{darkcandyapplered}{rgb}{0.64, 0.0, 0.0}
\definecolor{darkspringgreen}{rgb}{0.09, 0.45, 0.27}

	\usepackage{chngcntr}
 \usepackage{cleveref}
\usepackage[T1]{fontenc}
\usepackage{cases}

\usepackage{mdframed}
\usepackage{booktabs}
\usepackage{algpseudocode}
\usepackage[lined,linesnumbered,commentsnumbered]{algorithm}
\usepackage[font=normalsize,format=plain,labelfont={up},labelsep=space,justification=centering]{caption}

\usepackage{datenumber}
\usepackage{calc}
\newcounter{datetoday}
\newcounter{diffyears}
\newcounter{diffmonths}
\newcounter{diffdays}
\newcommand{\difftoday}[3]{%
      \setmydatenumber{datetoday}{\the\year}{\the\month}{\the\day}%
      \setmydatenumber{diffdays}{#1}{#2}{#3}%
      \addtocounter{diffdays}{-\thedatetoday}%
      \ifnum\value{diffdays}>0
        \def\diffbefore{Only }%
        \def\diffafter{left}%
      \else
        \def\diffbefore{}%
        \def\diffafter{ago}%
        \setcounter{diffdays}{-\value{diffdays}}%
      \fi
      \setcounter{diffyears}{\value{diffdays}/365}%
      \setcounter{diffdays}{\value{diffdays}-365*\value{diffyears}}%
      \setcounter{diffmonths}{\value{diffdays}/30}%
      \setcounter{diffdays}{\value{diffdays}-30*\value{diffmonths}}%
      \diffbefore
      \ifnum\value{diffyears}=0
      \else
        \ifnum\value{diffyears}>1
            {\color{red}\thediffyears\space years},
        \else
            {\color{red}\thediffyears\space year},
        \fi
      \fi
      \ifnum\value{diffmonths}=0
      \else
        \ifnum\value{diffmonths}>1
            {\color{red}\thediffmonths\space months}
        \else
            {\color{red}\thediffmonths\space month}
        \fi
      \fi
      {\color{red}\thediffdays\space days}
      \diffafter
}

\usepackage[scr=boondoxo]{mathalfa}

\usepackage{tikz}
\usetikzlibrary{positioning}
\usetikzlibrary{arrows.meta}
\usetikzlibrary{decorations.pathmorphing}
\tikzset{snake it/.style={decorate, decoration=snake}}

\DeclareMathAlphabet{\mathpzc}{T1}{pzc}{m}{it}

\newcommand{\matx}[1]{{\bf #1}} 
\newcommand{\vecc}[1]{{\bf #1}}

\newcommand{\win}[1]{X_{{#1}}}
\newcommand{\game}[1]{\mathsf{Game}_{{#1}}}

\newcommand{\prob}[1]{\mathrm{Pr}\left[{#1}\right]}

\def\P{\mathbf{P}}
\def\B{\mathbf{B}}

\def\SBGgen{\mathcal{G}_{\sf sbg}}

\newcommand*{\IndCPA}{\mathsf{IND{\mathchar`-}CPA}}

\newcommand*{\IndCCA}{\mathsf{IND{\mathchar`-}CCA}}
\newcommand*{\IndsCCA}{\mathsf{sIND{\mathchar`-}CCA}}

\def\qd{Q}

\def\ppt{\mathsf{ppt}}

\newcommand{\D}{\matx{D}}
\newcommand{\F}{\matx{F}}
\newcommand{\A}{\matx{A}}

\def\z{\mathbf{z}}

\def\r{\mathbf{r}}

\newcommand{\M}{{\matx{M}}}

\def\sample{\hookleftarrow}

\def\c{\vecc{c}}

\def\z{\vecc{z}}

\newcommand{\iCol}[2]{\left(\begin{matrix}#1\\#2\end{matrix}\right)}

\newcommand{\Sum}[2]{\sum\limits_{\substack{#1\in #2}}}

\newcommand{\Summ}[4]{\sum\limits_{\substack{#1\in #2\\ #3\in #4}}}
\newcommand{\Prod}[2]{\prod\limits_{\substack{#1\in #2}}}

\newcommand{\Prodneq}[4]{\prod\limits_{\substack{#1\in #2\\ #3\neq #4}}}

\newcommand{\abslt}[1]{\ensuremath{\left\lvert #1 \right\rvert}}
\newcommand{\Adv}[2]{\mathsf{Adv}_{#1}^{#2}(\secpp)}

\def\neglgbl{\mathsf{neg}(\secpp)}

\def\AA{\mathcal{A}}
\def\AC{\mathcal{C}}
\def\AB{\mathcal{B}}

\newcommand{\Zp}[1][]{\mathbb{Z}_{p_{#1}}}
\def\ZN{\mathbb{Z}_N}
\def\ZZ{\mathbb{Z}}
\def\NN{\mathbb{N}}

\newcommand{\Go}{\mathbb{G}_1}						
\newcommand{\Gt}{\mathbb{G}_2}						
\newcommand{\GT}{\mathbb{G}_{\text{T}}}				

\def\sbG{\mathbb{G}}
\def\sbGT{\mathbb{G}_{\text{T}}}	
\def\sbg{g}

\def\abG{\mathsf{G}}

\def\prm{p}

\newcommand{\cabgo}[2][\sbg]{{#1}_{#2}}

\def\sbgo{\cabgo[\sbg]{1}}
\def\sbgt{\cabgo[\sbg]{2}}
\def\sbgr{\cabgo[\sbg]{3}}
\def\sbgot{\cabgo[\sbg]{12}}

\def\sbgtr{\cabgo[\sbg]{23}}

\newcommand{\sbGi}[2][\sbG]{{#1}_{p_{#2}}}
\def\sbGo{\sbGi{1}}
\def\sbGt{\sbGi{2}}
\def\sbGr{\sbGi{3}}
\def\sbGot{\sbGi{1}\sbGi{2}}
\def\sbGor{\sbGi{1}\sbGi{3}}
\def\sbGtr{\sbGi{2}\sbGi{3}}

\def\dgho{\mathsf{SD1}}
\def\dght{\mathsf{SD2}}

\def\dejaq{\text{D\'ej\`a Q}}

\def\bee{\mathfrak{b}}

\def\kay{\mathscr{k}}								

\def\eu{\mathscr{u}}

\def\zee{\mathscr{z}}	
\def\ii{\mu}
\def\jj{\delta}

\def\IniT{\mathbf{Init}}
\def\SetuP{\mathbf{Setup}}
\def\QPhase#1{\mathbf{Query\ Phase}\text{-}{\sf #1}}
\def\KeyQ{\mathbf{Key\ Queries}}

\def\DecQ{\mathbf{Dec\ Queries}}	
	
\def\Chal{\mathbf{Challenge}}

\def\Setup{\mathsf{Setup}}
\def\Kgen{\mathsf{KeyGen}}

\def\Enc{\mathsf{Encrypt}}
\def\Dec{\mathsf{Decrypt}}

\def\msk{\mathsf{msk}}
\def\mpk{\mathsf{mpk}}

\def\sk{\mathsf{sk}}
\def\ssk{{\kappa}}

\def\ct{\mathsf{ct}}

\def\keyrand{r}

\newcommand{\kyr}[1][]{\keyrand_{#1}}

\newcommand{\hkyr}[1][]{\hat{\keyrand}_{#1}}

\def\cmsk{{\gamma}}
\def\bmsk{{\beta}}
\def\tbmsk{\widetilde{\bmsk}}
\def\amsk{{\alpha}}

\def\Sk{\mathcal{SK}}
\def\CT{\mathcal{CT}}
\def\Ct{\mathsf{Ct}}
\def\Ssk{\mathscr{K}}

\def\hfnc{\mathsf{H}}
\def\id{\mathsf{id}}
\def\ID{\mathcal{ID}}
\def\XSet{\mathsf{X}}
\def\YSet{\mathsf{Y}}
\def\chYSet{\mathsf{Y}^*}

\def\Xsz{\kay}
\def\Ysz{\ell}
\def\SSet{\mathsf{S}}

\def\USet{\mathcal{U}}
\def\Usz{\eu}

\mathchardef\mhyf="2D

\newcommand{\detr}[1]{\mathsf{det}(#1)}
\def\Span{\mathsf{span}}

\def\spe{\mathsf{SsPE}}

\def\ibr{\mathsf{IBR}}

\def\msg{M}

\def\MSg{\mathcal{M}}

\newcommand{\xpz}{{[0]}}
\newcommand{\xpo}{{[1]}}
\newcommand{\xpb}{{[\bee]}}

\def\etal{\textit{et al.\ }}

\def\abv#1#2{\overset{#1}{#2}}

\def\iseq{\abv{?}{=}}

\def\Au{\vecc{a}}
\def\Fu{\vecc{f}}
\def\Ad{\underline{\A}}

\def\rowcnto{m}
\def\rowcntt{q}
\def\ttee{t}
\def\colcnt{\rowcnto+\rowcntt+1}
\def\cnt{m+\ttee+1}

\def\speo{\mathsf{CM\mhyf I}}
\def\spet{\mathsf{CM\mhyf II}}
\def\mpk{\mathsf{pp}}
\def\secpp{\lambda}

\begin{document}

\title{\hspace{-0.45cm}Large Universe Subset Predicate Encryption with IND-CCA Security \\ (with Constant-size Ciphertext and Keys)}

\author{
    \IEEEauthorblockN{Sayantan Mukherjee\orcidlink{0000-0002-1236-8086}}\\
    \IEEEauthorblockA{\textit{Department of Computer Science and Engineering} \\
        \textit{Indian Institute of Technology Jammu}\\
            Jammu, India \\
            csayantan.mukherjee@gmail.com
    }
}

\markboth{Journal of \LaTeX\ Class Files,~Vol.~14, No.~8, August~2021}%
{Shell \MakeLowercase{\textit{et al.}}: Large Universe Subset Predicate Encryption with IND-CCA Security \\ (with Constant-size Ciphertext and Keys)}



\maketitle



\begin{abstract}
 Katz et al. (CANS'17) introduced Subset Predicate Encryption (SPE). 
 This scheme is a generalization of broadcast encryption as it emulates the \emph{subset containment} predicate in the encrypted domain. 
 They proposed two selectively IND-CPA secure SPE constructions in the small universe setting. 
 They also showed some black-box transformations of SPE to well-known primitives like WIBE and ABE to establish the richness of the SPE structure.\\
 Chatterjee and Mukherjee (RSA'19) proposed two SPE constructions in the large-universe setting.
 Their first construction achieved constant-size ciphertexts and secret keys, but it is proven secure in a restricted version of selective security. 
 Although the second construction achieves adaptive security, the ciphertext size depends on the size of the data-attribute set.
 Furthermore, neither of these two constructions achieves CCA security.\\
 In this work, we propose the first large-universe CCA-secure subset predicate encryption with constant-size ciphertext and secret keys.
 We prove this construction achieves standard selective security under the standard subgroup decision problems.
 Finally, we transform our extremely efficient SPE into the first CCA-secure WIBE, WKD-IBE, etc., with constant-size ciphertexts and secret keys via black-box transformations.
\end{abstract}
    
    
\begin{IEEEkeywords}
    Broadcast encryption, Subset Predicate Encryption, Elliptic curve cryptography, Pairing-based cryptography, Provable security    
\end{IEEEkeywords}

\section{Introduction} \label{sec:Intro}

Broadcast Encryption (BE) \cite{C:BonGenWat05} is a well-explored primitive that allows the encryption of a message for a set of users.
In a BE scheme, a ciphertext for a data-attribute set $\YSet$ can be decrypted by a secret key on an identity $x$ if $x\in\YSet$.
Katz \etal \cite{CANS:KMMS17} proposed an extension of this primitive, namely Subset Predicate Encryption (SPE).
Here, an SPE ciphertext for a data-attribute set $\YSet$ can be decrypted by a secret key for another key-attribute set $\XSet$ if $\XSet\subseteq\YSet$.
Katz \etal \cite{CANS:KMMS17} then introduced several generic transformations to convert an SPE to Wildcarded IBE (WIBE) \cite{ICALP:ACDMNS06}, Wicked IBE (WKD-IBE) \cite{ESORICS:AbdKilNev07}, Ciphertext Policy Attribute-based Encryption (CP-ABE) for Disjunctive Normal Form (DNF), etc.
They also proposed two selective CPA-secure SPE constructions in the small-universe setting.

A follow-up work by Chatterjee and Mukherjee \cite{RSA:ChaMuk19} proposed two large-universe\footnote{The universe is exponential-size.} SPE constructions.
Their first construction $(\speo)$ achieved constant-size ciphertexts and secret keys, but is proven secure in a restrictive variant of selective IND-CPA security, which they referred to as selective$^*$ IND-CPA security.
To achieve adaptively IND-CPA security, their second construction $(\spet)$ requires a ciphertext length that is dependent on the size of the set $\YSet$.
Later works \cite{DSC:TseGao21,CSI:Tseng24} also aimed at SPE construction where the ciphertext is dependent on $|\YSet|$.

A separate line of work \cite{PKC:YAHK11,PKC:YASSHK12,AAECC:NanPan17} studied the problem of transforming an $\IndCPA$ secure predicate encryption to an $\IndCCA$ secure predicate encryption \cite{EC:KatSahWat08}.
They utilize different frameworks, such as verifiability and delegation, which are often less efficient or more challenging to achieve.
Follow-up works \cite{INDOCRYPT:ChaMukPan17,AMC:NanPan22} proposed a simpler \emph{direct} technique to get a CCA-secure predicate encryption construction.
However, to date, the efficient \emph{direct} approach has only been applied to pair-encoding-based predicate encryption \cite{EC:Attrapadung14}.
We emphasize that the \emph{direct} approach of \cite{INDOCRYPT:ChaMukPan17,AMC:NanPan22} utilizes the linear reconstruction-based structure critically.

\subsection{Motivation}
IBE with wildcards have been studied for some time. As per our knowledge, all the existing CPA secure solutions grow with $(i)$ the number of non-wildcard components \cite{ICALP:ACDMNS06,ESORICS:AbdKilNev07} or $(ii)$ with the number of wildcard components \cite{SCN:SVNHJ10}. Katz \etal \cite{CANS:KMMS17} and follow-up work by Chatterjee and Mukherjee \cite{RSA:ChaMuk19} also achieve similar efficiency. Achieving reasonably secure and efficient constructions of IBE with wildcards has been a long-standing open question. Of course, one could generically transform $\speo$, the first SPE construction of Chatterjee and Mukherjee \cite{RSA:ChaMuk19}, towards an IBE with wildcard with constant-size ciphertexts and keys. However, $\speo$ achieved security under the selective$^*$ IND-CPA security model, which is highly restrictive for practical purposes. Moreover, they did not consider the question of CCA security, which is essential for real-world deployment. 

\subsubsection{Commentary on Selective$^*$ IND-CPA security of \cite{RSA:ChaMuk19}}
The highly restricted nature of selective$^*$ IND-CPA security of \cite{RSA:ChaMuk19} sips into the constructions of IBE with wildcards as well.
We justify this claim by first explaining the selective$^*$ IND-CPA security of \cite{RSA:ChaMuk19} followed by its effect on the transformations that we described in \Cref{sec:Applications}. 
The selective$^*$ IND-CPA security of \cite{RSA:ChaMuk19} allows the adversary to challenge on $\chYSet$ even before getting the public param $\mpk$. 
The adversary can then make polynomially many subset queries for $\XSet_i$ adaptively, such that $\XSet_i\not\subset \YSet$ and $\XSet_i\setminus (\cup_{j\in[q]\setminus\{i\}} \XSet_j)\neq \phi$.
To elaborate, the second condition requires that $\forall i\in[q]$, $\exists x\in\XSet_i$ s.t. $x\notin \cup_{j\in[i-1]} \XSet_j$.
Informally speaking, every $i^{th}$ key generation query should have a completely new element that is absent in all key index sets for all previous key generation queries.
The transformations that we described in \Cref{sec:Applications} encode identities into sets. 
For example, key queries on $(1,0,0)$, $(1,1,0)$, $(1,0,1)$ and $(1,1,1)$ will encode into $\{1,4,6\}$, $\{1,3,6\}$, $\{1,4,5\}$ and $\{1,3,5\}$ respectively.
Observe that, all elements of $\{1,3,5\}$ are present in $\{1,4,6\}\cup\{1,3,6\}\cup\{1,4,5\}$.
Thus, due to the limitation of underlying selective$^*$ IND-CPA security, $\speo$-based constructions of IBE with wildcards cannot allow key queries on identities $(1,0,0)$, $(1,1,0)$, $(1,0,1)$ and $(1,1,1)$. 
Thus the adversary of selective$^*$ IND-CPA security of \cite{RSA:ChaMuk19} is unnaturally restricted.

\subsection{Our Contribution}
In this work, we propose the first large-universe SPE construction with constant-size ciphertext and secret key that is proven selectively IND-CCA secure. We emphasize that our SPE construction achieves optimal ciphertexts and secret keys, even in the exponential-sized universe. To summarise, our achievement here is threefold:
\begin{itemize}
    \item We critically analyse $\speo$ to remove the restriction on the selective IND-CPA security that \cite{RSA:ChaMuk19} imposed.
    \item We modify $\speo$ to achieve IND-CCA security, paying an insignificant price. Looking ahead, our construction adds only one additional group element to the ciphertext and four group elements to the public parameter over $\speo$.
    \item By applying the generic transformations from \cite{CANS:KMMS17} to our SPE construction, we obtain CCA-secure WIBE, wicked IBE (WKD-IBE), and CP-ABE for disjunction (DNF-ABE) with constant-size ciphertexts and secret keys. This resolves a long-standing question of IBE with wildcards using a constant-size ciphertext and secret key. 
\end{itemize}

\subsection{Technical Overview}
We begin with the security proof of $\speo$ \cite{RSA:ChaMuk19}, following a brief overview of its construction, simplified to its core.
To achieve constant-size ciphertext and keys, $\speo$ \cite{RSA:ChaMuk19} encodes both data-attribute set $\YSet$ and key-attribute set $\XSet$ in terms of corresponding characteristic polynomial $P_{\YSet}(\zee)$ and $P_{\XSet}(\zee)$ respectively where $P_{\SSet}(\zee)=\Prod{x}{\SSet} (\zee+x)$.
A secret key $u^{\frac{1}{P_{\XSet}(\zee)}}$ of $\speo$ decrypts a ciphertext $g^{P_{\YSet}(\zee)}$ by computing $e(g,u)^{P_{\YSet\setminus\XSet}(\zee)}$ where $g,u\in\sbG$.
Note that we provide this simplified representation only to informally overview the functionality.
Secret keys and ciphertexts of the actual $\speo$ \cite{RSA:ChaMuk19} define additional components involving randomness for security.

We now focus on the security proof of $\speo$ \cite{RSA:ChaMuk19} informally.
Chatterjee and Mukherjee \cite{RSA:ChaMuk19} argued the proof of security of $\speo$ via a standard $\dejaq$-based argument \cite{TCC:Wee16,SCN:GonLibRam18}, which is a dual-system encryption-based proof technique \cite{C:Waters09}.
Informally speaking, the proof makes the challenge ciphertext \emph{semi-functional} and then gradually makes the secret keys \emph{semi-functional}. 
Once the challenge ciphertext and all the keys are semi-functional, the proof replaces the semi-functional components with random elements and argues that such a change will be invisible to even an unbounded adversary.
To make this replacement invisible to the adversary, the proof expresses the semi-functional components as a system of linear equations $(\z=\A\r)$. 
It is sufficient to argue that the linear map $\A$ is non-singular for the proof to go through.
However, this is where Chatterjee and Mukherjee \cite{RSA:ChaMuk19} encounter a limitation, which we explain in the following section.

To explain this, we recall that each row of $\A$ is essentially the semi-functional component of secret keys queried or public parameters. 
Furthermore, we recall that each secret key of $\speo$ encoded a set $\XSet$ in terms of the characteristic polynomial $P_{\XSet}(\zee)=\Prod{x}{\XSet} (\zee+x)$.
Now, suppose the adversary makes a sequence of key extraction queries on $\XSet_1=\{a,b\},\XSet_2=\{b,c\},\XSet_3=\{a,c\}$ such that $\A[1],\A[2],\A[3]$ store $\frac{1}{P_{\XSet_1}(\zee)},\frac{1}{P_{\XSet_2}(\zee)},\frac{1}{P_{\XSet_3}(\zee)}$ respectively.
For $\XSet'=\{a,b,c\}$, it is easy to see that $\frac{1}{P_{\XSet'}(\zee)}$ can be computed via two paths here: $(i)$ pair $\frac{1}{P_{\XSet_1}(\zee)},\frac{1}{P_{\XSet_2}(\zee)}$, $(ii)$ pair $\frac{1}{P_{\XSet_2}(\zee)},\frac{1}{P_{\XSet_3}(\zee)}$. 
In particular, $\frac{1}{(\zee+a)(\zee+b)(\zee+c)} = \frac{1}{(c-a)}\cdot\frac{1}{(\zee+a)(\zee+b)}+\frac{1}{a-c}\cdot\frac{1}{(\zee+b)(\zee+c)}=\frac{1}{(b-a)}\cdot\frac{1}{(\zee+a)(\zee+c)}+\frac{1}{a-b}\cdot\frac{1}{(\zee+b)(\zee+c)}$.
For the given sequence of queries, it is easy to see that $\A$ is non-singular.
To stop these queries, they \cite{RSA:ChaMuk19} restricted the selective $\IndCPA$ security to the so-called selective$^*$ $\IndCPA$ security.

We first notice that the non-singularity of $\A$ is a sufficient condition for the proof but not a necessary one.
Looking ahead, we observe that a particular row (of our interest) of $\A$ is independent of the rest of the rows of $\A$.
Informally speaking, this particular row can be replaced with a random quantity irrespective of the query sequences any adversary makes. 
Interestingly, this small observation makes our security argument significantly more straightforward and easier to comprehend.
For better understanding, we give an informal proof sketch in \Cref{sec:SPE-Sec}.

To make this construction $\IndCCA$ secure, we revisit the so-called \emph{direct} technique of  \cite{INDOCRYPT:ChaMukPan17,AMC:NanPan22,INDOCRYPT:BlaMuk20}.
Looking ahead, we will also gradually modify the verification of all decryption queries.
The novelty in our proof is that we broaden the horizon of \emph{direct} technique over \cite{INDOCRYPT:ChaMukPan17,AMC:NanPan22,INDOCRYPT:BlaMuk20} which work in pair-encoding-based predicate encryptions only. 
We describe our approach next informally.
We introduce an extra ciphertext component, namely, a Boneh-Boyen Hash \cite{C:BonBoy04} evaluated on a tag $(\tau)$ where the tag $\tau$ is a collision-resistant hash of the existing ciphertext (which we sometimes call CPA-ciphertext).
This ensures that any change on this part of the ciphertext results in a different $\tau'$, and the universal hash argument makes them independent.
Based on this observation, we make the consistency check \emph{semi-functional}, which catches if the adversary makes a decryption query on the semi-functional ciphertext.
This ensures that a decryption query is answered correctly only if the adversary computed the ciphertext.

We now informally mention the difference of our technique with the so-called \emph{direct} technique of \cite{INDOCRYPT:ChaMukPan17,AMC:NanPan22}.
To retrieve the plaintext, \cite{INDOCRYPT:ChaMukPan17,AMC:NanPan22} used a pairing product equation which implicitly performed a consistency check.
Thus, if a ciphertext does not satisfy the consistency check, its decryption function does not produce the plaintext.
This is stronger than the requirement.
The Cramer-Shoup CCA-secure encryption inspires us \cite{C:CraSho98} where CPA-ciphertext decryption and consistency checks are somewhat loosely connected.
Like Cramer-Shoup \cite{C:CraSho98}, our decryption function is defined by two separate steps -- $(i)$ Perform consistency check, and abort if it fails; $(ii)$ Perform CPA decryption on the CPA-ciphertext.
Note that this technique allows a decryptor to retrieve the plaintext decrypting CPA-ciphertext if somehow the decryptor manages to avoid the consistency check.
Similar to Cramer-Shoup \cite{C:CraSho98}, we achieve CCA security by ensuring that decryption runs the consistency check first and aborts if it fails.
To summarise, we modify the so-called \emph{direct} technique of \cite{INDOCRYPT:ChaMukPan17,AMC:NanPan22} to support non-linear encodings as well.
This process opens up further application of this technique for other predicate encryptions that are not pair-encoding based, which was necessary for \cite{INDOCRYPT:ChaMukPan17,AMC:NanPan22}.

\paragraph{Organization of the Paper.}
In~\Cref{sec:Preliminaries}, we recall a few definitions, mathematical tools and present the notations that will be followed in this paper.
In~\Cref{sec:Construct}, we propose the first CCA-secure subset predicate encryption and its proofs.
\Cref{sec:Applications} presents the effect of our SPE construction and \Cref{sec:Conclude} concludes this paper.

\section{Mathematical Tools and Preliminaries} \label{sec:Preliminaries}

\subsection{Mathematical Tools}
\paragraph{Notations.} Here we denote $[a,b]=\{i\in\NN:a\leq i\leq b\}$ and for any $n\in\NN$, ${\small [n]=[1,n]}$.
By $s\sample S$ we denote a uniformly random choice $s$ from $S$. 
The security parameter is denoted by $1^{\secpp}$ where $\secpp\in\NN$.
We use $\neglgbl$ to denote negligible function.
We use 
$\Adv{\AA}{{\sf HP}}$ is used to denote the advantage of $\AA$ to solve the problem ${\sf HP}$.

Gong \etal \cite{SCN:GonLibRam18} proposed and proved the following lemma. We state it as a fact and use this result in this work.
\begin{lemma}\label{lem:SPE-I-GLR18}
    For $m,\ttee,\cmsk_1,\ldots,\cmsk_{m+\ttee+1}\in\NN$, if 
    {\scriptsize
    \begin{equation}
        \D = 
        \left[\begin{matrix}
            1 & 1 & \ldots & 1 \\
            \cmsk_1^{} & \cmsk_2^{} & \ldots & \cmsk_{m+\ttee+1}^{}\\\\
            \cmsk_1^2 & \cmsk_2^2 & \ldots & \cmsk_{m+\ttee+1}^2\\
            \vdots & \vdots & \ddots & \vdots \\
            \cmsk_1^m & \cmsk_2^m & \ldots & \cmsk_{m+\ttee+1}^m\\\\
            \frac{1}{(\cmsk_1+x_1)} & \frac{1}{(\cmsk_2+x_1)} & \ldots & \frac{1}{(\cmsk_{m+\ttee+1}+x_1)}\\\\
            \frac{1}{(\cmsk_1+x_2)} & \frac{1}{(\cmsk_2+x_2)} & \ldots & \frac{1}{(\cmsk_{m+\ttee+1}+x_2)}\\
            \vdots & \vdots & \ddots & \vdots \\
            \frac{1}{(\cmsk_1+x_\ttee)} & \frac{1}{(\cmsk_2+x_\ttee)} & \ldots & \frac{1}{(\cmsk_{m+\ttee+1}+x_\ttee)}\\
        \end{matrix}\right]
        \end{equation}
    }
    then $\detr{\D}=\delta\cdot\frac{\prod_{1\leq l<j\leq \ttee}(x_l-x_j)\prod_{1\leq i<k\leq \cnt} (\gamma_i-\gamma_k)}{\prod_{k=1}^{\cnt} \prod_{l=1}^{\ttee} (\gamma_k+x_l)}$ where $\delta\in\NN$. 
\end{lemma}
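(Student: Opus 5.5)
The matrix $\D$ is square of size $(m+\ttee+1)\times(m+\ttee+1)$: $m+1$ monomial rows $\cmsk^0,\ldots,\cmsk^m$ and $\ttee$ rows $1/(\cmsk+x_l)$. I would clear denominators and reduce to an alternant of polynomials. For each column $k$, multiply by $Q(\cmsk_k)=\prod_{l=1}^{\ttee}(\cmsk_k+x_l)$. This gives
\[
\detr{\D}=\frac{\detr{\D'}}{\prod_{k=1}^{\cnt}\prod_{l=1}^{\ttee}(\cmsk_k+x_l)}.
\]
The entries of $\D'$ are $f_i(\cmsk_k)$ for the polynomials
\[
f_i(\zee)=\zee^i Q(\zee)\ (0\le i\le m), \qquad g_l(\zee)=Q(\zee)/(\zee+x_l)\ (1\le l\le \ttee).
\]
Each of these has degree at most $m+\ttee$, which is one less than the number of columns. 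So $\D'=\mathbf{C}\,\mathbf{V}$, where $\mathbf{V}$ is the Vandermonde matrix $[\cmsk_k^{j}]_{0\le j\le m+\ttee}$ and $\mathbf{C}$ is the square coefficient matrix of the family $\{f_i\}\cup\{g_l\}$ in the monomial basis. Since $\detr{\mathbf{V}}=\pm\prod_{i<k}(\cmsk_i-\cmsk_k)$, this produces the $\cmsk$-Vandermonde factor and the claimed denominator.

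What remains is to show that $\detr{\mathbf{C}}$ equals a nonzero constant times $\prod_{l<j}(x_l-x_j)$. This constant is the $\delta$, which may also absorb signs. I would argue in two steps.

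\textbf{Vanishing and degree.} $\detr{\mathbf{C}}$ is a polynomial in the $x$'s. When $x_l=x_j$ we get $g_l=g_j$, so $\detr{\mathbf{C}}$ vanishes there and is divisible by the Vandermonde product in the $x$'s. Comparing degrees, the $g$-rows contribute total degree at most $\binom{\ttee}{2}$ in the relevant minors after row reduction.

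\textbf{Direct identification.} More cleanly, I would work in the quotient space of polynomials of degree $\le m+\ttee$ modulo the span of $\{\zee^iQ\}_{i\le m}$, i.e.\ the multiples of $Q$ of degree at most $m+\ttee$. This quotient is isomorphic to the polynomials modulo $Q$, which have dimension $\ttee$. Under the CRT isomorphism $h\mapsto (h(-x_l))_l$, the polynomial $g_l$ maps to a vector supported only at coordinate $l$, with value $\prod_{j\neq l}(x_j-x_l)$. The determinant of $\mathbf{C}$ then factors as
\[
\detr{\mathbf{C}}=(\text{det of }\{\zee^iQ\}\text{ block})\cdot \detr{\text{CRT map}}^{-1}\cdot\prod_l\prod_{j\ne l}(x_j-x_l).
\]
Here the first factor is $1$, since those polynomials are monic shifts, and the CRT map from the basis $1,\ldots,\zee^{\ttee-1}$ is the $x$-Vandermonde with determinant $\pm\prod_{l<j}(x_l-x_j)$. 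The product $\prod_l\prod_{j\ne l}$ equals $\pm$ the square of the $x$-Vandermonde, so dividing leaves $\pm\prod_{l<j}(x_l-x_j)$.

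I expect the main obstacle to be keeping this block decomposition rigorous. Concretely, I would choose the basis $\{\zee^iQ\}_{i\le m}\cup\{\zee^j\}_{j<\ttee}$ of the degree-$\le m+\ttee$ polynomials and note that its change-of-basis matrix to monomials is unitriangular. Then I would express each $g_l$ in this basis; since $\deg g_l<\ttee$, it lies purely in the $\zee^j$ part, which gives a block-triangular matrix. The result is $\delta=\pm1$, which is consistent with (indeed stronger than) the stated form with an unspecified constant $\delta$.
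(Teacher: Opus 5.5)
The paper does not prove this lemma. It imports it from Gong, Libert and Ramanna \cite{SCN:GonLibRam18} and uses it only to conclude that $\D$ is nonsingular. So there is no in-paper argument to compare against, and your proposal has to stand on its own, which it does.

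Your argument proceeds in three steps:
\begin{itemize}
\item You scale column $k$ by $Q(\cmsk_k)$ and factor $\D'=\mathbf{C}\mathbf{V}$.
\item You pass to the monic, degree-graded basis $\{\zee^j\}_{j<\ttee}\cup\{\zee^iQ\}_{0\le i\le m}$. This writes $\mathbf{C}=\mathbf{C}_B\mathbf{T}$, where $\mathbf{T}$ is unitriangular and $\mathbf{C}_B$ is a row permutation of $\mathrm{diag}(G,I_{m+1})$. Here $G$ is the $\ttee\times\ttee$ coefficient matrix of the $g_l$ in $1,\zee,\ldots,\zee^{\ttee-1}$.
\item Your CRT observation is exactly the matrix identity $G\,W=\mathrm{diag}\bigl(\prod_{j\neq l}(x_j-x_l)\bigr)_{l}$ with $W=[(-x_p)^j]_{j,p}$. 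This gives $\detr{G}=\pm\prod_{l<j}(x_l-x_j)$ for distinct $x_l$. If two of them coincide, $G$ has two equal rows and both sides vanish.
\end{itemize}
I would present the proof in exactly this form. Drop the ``vanishing and degree'' paragraph: its degree count is not an argument, and the direct identification supersedes it anyway. Every step is a polynomial identity, so the proof also holds over $\Zp[2]$, which is where the paper actually applies the lemma.

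One correction concerns $\delta$. Tracking signs through your own factorization gives
\begin{itemize}
\item $\detr{\mathbf{V}}=(-1)^{\binom{m+\ttee+1}{2}}\prod_{i<k}(\cmsk_i-\cmsk_k)$,
\item $\detr{\mathbf{C}}=(-1)^{(m+1)\ttee}\detr{G}$,
\item $\detr{G}=(-1)^{\binom{\ttee}{2}}\prod_{l<j}(x_l-x_j)$.
\end{itemize}
Hence $\delta=(-1)^{\binom{m+1}{2}}$. For example, with $m=\ttee=1$, $(\cmsk_1,\cmsk_2,\cmsk_3)=(1,2,3)$ and $x_1=0$, one gets $\detr{\D}=1/3$, while the displayed fraction equals $-1/3$. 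So your $\delta=\pm1$ is not ``stronger than'' the stated $\delta\in\NN$; it corrects it, and the lemma should read $\delta\in\{1,-1\}$. This is harmless for the paper, which only needs $\detr{\D}\neq 0$.
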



\subsection{Composite Order Bilinear Pairings.}\label{sec:Prelim-comp-biPairing}
A composite order symmetric bilinear group generator $\SBGgen$, takes the security parameter $1^{{\secpp}}$ and returns a $4$-tuple $(N,\sbG,\sbGT,e)$ where both $\sbG,\sbGT$ are cyclic groups of order $N=\prm_1\prm_2\prm_3$ where all $\prm_i$ are large primes and $e:\sbG\times\sbG\rightarrow \sbGT$ is an admissible, non-degenerate Type-1 bilinear pairing.
In this work, we denote $\sbG=\sbGo\sbGt\sbGr$ where $\sbGi{i}$ denotes a subgroup of $\sbG$ of order $p_i$.
By convention $\sbg_{i,j}$ is an element of subgroup $\sbGi{i}\sbGi{j}$ and $\sbGi{i}^*=\sbGi{i}\setminus \{1\}$.
It is evident that $e(\sbg_i, \sbg_j)=1$ if $i\neq j$.

\subsection{Hardness Assumptions}\label{sec:Prelim-comp-Hard}\label{sec:Prelim-Hard}
Let $(N,\sbG,\sbGT,e)\leftarrow \SBGgen(1^{\secpp})$ be the output of symmetric bilinear group generator where both $\sbG,\sbGT$ are cyclic groups of order $N=\prm_1\prm_2\prm_3$ where $\prm_1, \prm_2,\prm_3$ are large primes.
We define two variants of subgroup decision problems \cite{TCC:Wee16} as follows:

\paragraph{$\dgho$} 
The advantage of any adversary $\AA$ to solve the $\dgho$ is $\Adv{\AA}{\dgho}=$
{\[
    |\prob{\AA(\sbgo, \sbgr, \sbgot, T_0)\rightarrow 1}-\prob{\AA(\sbgo, \sbgr, \sbgot, T_1)\rightarrow 1}|
\] }
where $T_0\sample \sbGo$ and $T_1\sample \sbGot$ where $\sbgo\sample \sbGo^*$, $\sbgr\sample \sbGr^*$ and $\sbgot\sample \sbGot$.  
We say $\dgho$ is hard if $\Adv{\AA}{\dgho}\leq \neglgbl$ for any arbitrary $\ppt$ adversary $\AA$.

\paragraph{$\dght$} 
The advantage of any adversary $\AA$ to solve the $\dght$ is $\Adv{\AA}{\dght}=$
\[\hspace{-0.2cm}
    |\prob{\AA(\sbgo, \sbgr, \sbgot, \sbgtr, T_0)\rightarrow 1}-\prob{\AA(\sbgo, \sbgr, \sbgot, \sbgtr, T_1)\rightarrow 1}|
\] 
where $T_0\sample \sbGor$ and $T_1\sample \sbG$ where $\sbgo\sample \sbGo^*$, $\sbgr\sample \sbGr^*$, $\sbgot\sample \sbGot$ and $\sbgtr\sample\sbGtr$. 
We say $\dght$ is hard if $\Adv{\AA}{\dght}\leq \neglgbl$ for any arbitrary $\ppt$ adversary $\AA$.

\subsection{Subset Predicate Encryption (SPE)}\label{sec:SPE}
A subset predicate encryption (SPE) scheme is defined by a tuple of four $\ppt$ algorithms $(\Setup,\Kgen,\Enc,\Dec)$ as following.
\subsubsection{Definition}\label{sec:SPE-Def}
\begin{itemize}
    \item $\Setup(1^{\secpp},m)$: It takes the security parameter $1^{\secpp}$ and $m\in\mathsf{poly}({\secpp})$. It computes a master secret key $\msk$ and the corresponding public param $\mpk$. We assume $\mpk$ to store an identity space $\USet$, a message space $\MSg$, a key space $\Sk$ and a ciphertext space $\CT$.
    \item $\Kgen(\msk,\XSet)$: It takes $\msk$ and a set $\XSet\subset\USet$ s.t. $|\XSet|\leq m$. It outputs a secret key $\sk\in\Sk$.
    \item $\Enc(\mpk,\YSet,\msg)$: It takes $\mpk$, a set $\YSet\subset\USet$  s.t. $|\YSet|\leq m$ and a message $\msg\in\MSg$. It outputs a ciphertext $\Ct\in\CT$.
    \item $\Dec(\mpk,\sk,\Ct)$: It takes $\mpk$, $\sk$ and $\Ct$ as input. It outputs $\msg'\in \MSg\sqcup \{\bot\}$. The special symbol $\bot$ indicates rejection.
\end{itemize}

\paragraph{Correctness.} 
For any security parameter $1^{\secpp}$, any key-attribute set $\XSet\subset\USet$, any data-attribute set $\YSet\subset\USet$, such that if $\XSet\subseteq\YSet$,
\[\Pr[\Dec(\mpk,\sk,\Ct)=\msg]=1-\neglgbl\]
where the probability is taken over $(\msk,\mpk)\leftarrow\Setup(1^{\secpp},m)$, $\sk\leftarrow \Kgen(\msk,\XSet)$, $\Ct\leftarrow \Enc(\mpk,\YSet,\msg)$. 

\subsubsection{Security Definition.}\label{sec:SPE-SecDef}
The security of subset-predicate encryption $\spe$ is modeled as a security game between a challenger $\AC$ and an adversary $\AA$.
\begin{itemize}
    \item $\SetuP$: $\AC$ runs $\Setup$ to give out $\mpk$ and keeps $\msk$ as secret.
    
    \item $\QPhase{I}$: $\AA$ makes the following queries:
    \begin{itemize}
        \item $\KeyQ$: $\AA$ gives a key-attribute set $\XSet\subset\USet$ s.t. $|\XSet|\leq m$. $\AC$ returns $\sk\leftarrow\Kgen(\msk,\XSet)$ to $\AA$.
        \item $\DecQ$: $\AA$ gives a key-attribute set $\XSet\subset\USet$, a ciphertext $\Ct$ w.r.t. a data-attribute set $\YSet\subset\USet$ s.t. $|\XSet|,|\YSet|\leq m$. $\AC$ returns $\Dec(\mpk,\Kgen(\msk,\XSet),\Ct)$ to $\AA$.
    \end{itemize}
    
    \item $\Chal$: $\AA$ provides challenge messages $\msg^{\xpz}, \msg^{\xpo}\in\MSg$ and a set $\chYSet\subset\USet$ such that: $|\chYSet|\leq m$ and $\XSet\not\subset\chYSet$ for all $\XSet$ queried for secret key. $\AC$ returns $\Ct^{\xpb}\leftarrow \Enc(\mpk, \chYSet, \msg^{\xpb})$ for $\bee\sample\{0,1\}$.
    
    \item $\QPhase{II}$: Same as $\QPhase{I}$ subject to the restrictions: 
    \begin{itemize}
        \item Key query on $\XSet\subset \USet$ s.t. $|\XSet|\leq m$ is allowed if $\XSet\not\subset\chYSet$.
        \item $\Ct^{\xpb}$ is not given as a decryption query.
    \end{itemize}

    \item \textbf{Guess}: $\AA$ outputs its guess $\bee'\in\{0,1\}$ and wins if $\bee=\bee'$.
\end{itemize}
For any adversary $\AA$ the advantage is,
{\[\mathbf{Adv}_{\AA,\spe}^{\IndCCA}=|\Pr[\bee=\bee']-1/2|.\]}
\noindent A subset-predicate encryption scheme $\spe$ is said to be $\IndCCA$ secure if for all $\ppt$ adversary $\AA$, $\Adv{\AA,\ibr}{\IndCCA}\leq \neglgbl$. 
If there is an $\IniT$ phase before the $\SetuP$ where the adversary $\AA$ commits to the challenge data-attribute set $\chYSet$, we call such security model as selective $\IndCCA$ security $(\IndsCCA)$ model.

{The term \emph{Predicate Encryption} typically encompasses security notion such as attribute-hiding. Subset Predicate Encryption, originally coined by Katz \etal \cite{CANS:KMMS17} did not consider this security property. In this work, we focus on efficiency in terms of ciphertext-size and key-size, rather than attribute-hiding or key-hiding security. This work follows the standard practice of calling this primitive as Subset Predicate Encryption with an acceptance of the potential miscalling.}    

\section{Construction} \label{sec:Construct}
\def\CRHfnc{\mathsf{CH}}
\def\UHfnc{\mathsf{UH}}

We present an SPE construction $\spe$ with a constant-size secret key and a constant-size ciphertext in the composite order pairing setting. Let $\Ssk=\{0,1\}^{\mathsf{poly}({\secpp})}$ is the kem-key space. Let $\CRHfnc=\{f: \Ssk\times \sbG\times \sbG \rightarrow \ZN\}$ is the set of all collision-resistant hash functions and $\UHfnc=\{f: \sbGT\rightarrow \Ssk\}$ is the set of all universal hash functions. The following four algorithms define $\spe$ in \Cref{fig:Construction-SPE}.

\begin{figure*}[ht]
    \centering
    \fbox{
        \begin{minipage}{0.52\textwidth}
            \underline{$\Setup(1^{\secpp},m)$}
            \begin{algorithmic}[1]
                \State $(N,\sbG,\sbGT,e)\leftarrow \SBGgen(1^{\secpp})$ for $N=p_1p_2p_3$.
                \State $\hfnc_1\sample \CRHfnc$, $\hfnc_2\sample \UHfnc$.
                \State $\amsk,\bmsk,a,b\sample \ZN$, $\sbgo,u\sample \sbGo$, $\sbgr\sample\sbGr$. 
                \State For $i\in[m]$, 
                \item[] \hspace{0.2cm} $G_i=\sbgo^{\amsk^i}$, $U_i=u^{\amsk^i}\cdot R_{3,i}$ for $R_{3,i}\sample\sbGr$. 
                \State $\msk=(\amsk,\bmsk, u, \sbgr)$.
                \State $\mpk=(\sbgo,\sbgo^{\bmsk},\sbgo^a,\sbgo^b,e(\sbgo,u)^\bmsk,\sbgo^{\bmsk a},\sbgo^{\bmsk b},\hfnc_1,\hfnc_2,\left\{G_i, U_i\right\}_{i\in[m]})$
                \item[] 
                \item[] 
            \end{algorithmic}
            \underline{$\Enc(\mpk,\YSet, \msg)$}
            \begin{algorithmic}[1]
                \State Let $\YSet\subset \ID$ s.t. $|\YSet|=\Ysz\leq m$.
                \State Let $P_{\YSet}(\zee) =\Prod{y}{\YSet} (\zee+y)=\Sum{i}{[\Ysz]} b_i \zee^i$.
                \State $s\sample\ZN$. 
                \State $\Ct=(\hat{\ct}, \ct_0,\ct_1, \ct_2)$ where,
                \item[] \hspace{0.5cm}$\hat{\ct}=\msg\oplus\ssk$,
                \item[] \hspace{0.5cm}$\ct_0=\sbgo^{\bmsk s}$,
                \item[] \hspace{0.5cm}$\ct_1=\sbgo^{P_{\YSet}(\amsk) s}$,
                \item[] \hspace{0.5cm}$\ct_2= \sbgo^{\bmsk(a+\tau b)s}$,
                \item[] \hspace{-0.3cm}for $\ssk=\hfnc_2(e(\sbgo, u)^{\bmsk s})$ and $\tau=\hfnc_1(\hat{\ct},\ct_0,\ct_1)$.
            \end{algorithmic}        
            \end{minipage}
        \vline \hspace{1pt}
        \begin{minipage}{0.43\textwidth}
            \underline{$\Kgen(\msk, \XSet)$}
            \begin{algorithmic}[1]
                \State Let $\XSet\subset \ID$ s.t. $|\XSet|=\Xsz\leq m$.
                \State Let $P_{\XSet}(\zee) =\Prod{x}{\XSet} (\zee+x)=\Sum{i}{[\Xsz]} a_i \zee^i$.
                \State $\sk=u^{\frac{\bmsk}{P_{\XSet}(\amsk)}}\cdot X_3$ for $X_3\sample \sbGr$.
                \item[] 
                \item[]
                \item[] 
                \item[]
            \end{algorithmic}
        \underline{$\Dec(\mpk, (\XSet,\sk), (\YSet,\Ct))$}
            \begin{algorithmic}[1]
                \State Let $\Ct=(\hat{\ct}, \ct_0,\ct_1, \ct_2)$. 
                \State Let $\tau'=\hfnc_1(\hat{\ct},\ct_0,\ct_1)$. 
                \State $r\sample\ZN$.
                \State Consistency check: Abort with output $\bot$ if fails.\[e(\ct_2, \sbgo^r)\cdot e(\ct_1, \sbgo^{\bmsk r})\iseq e(\ct_0, (\sbgo^{a}\cdot (\sbgo^{b})^{\tau'}\cdot \sbgo^{P_{\YSet}(\amsk)})^r).\]
                \State $P_{\YSet\setminus\XSet}(\zee)=\Prod{v}{\YSet\setminus\XSet}(\zee+v)=\Sum{i}{[\Usz]} c_i \zee^i$.
                \State $\mu=e(\ct_1,\sk)\cdot e(\ct_0, \Prod{i}{[1,\Usz]} U_i^{c_i})^{-1}$.
                \State $\msg'=\hat{\ct}\oplus\hfnc_2(\mu^{1/{c_0}})$.
            \end{algorithmic}     
        \end{minipage}
    }
    \caption{Our SPE Construction: $\spe$}
    \label{fig:Construction-SPE}
\end{figure*}


\subsection{Correctness}\label{sec:SPE-Correctness}
\begin{itemize}
    \item $\tau'=\hfnc_1(\hat{\ct},\ct_0,\ct_1)=\tau$. 
    \item $\XSet\subseteq\YSet\Rightarrow \exists\ \USet\subset \ID\setminus\XSet$ s.t. $\USet\cup \XSet=\YSet$
\end{itemize}

$\star$ \textit{Consistency Check:}
\vspace{-0.3cm}
\begin{equation*}
    \begin{aligned}
        &e(\ct_0, \sbgo^{a}\left(\sbgo^{b}\right)^{\tau'}\sbgo^{P_{\YSet}(\amsk)}) = e(\sbgo^{\bmsk s}, \sbgo^{(a+\tau' b+P_{\YSet}(\amsk))})\\ =&\ e(\sbgo^{\bmsk(a+\tau b)s}, \sbgo)e(\sbgo^{P_{\YSet}(\amsk)s}, \sbgo^{\bmsk})=e(\ct_2, \sbgo)e(\ct_1, \sbgo^{\bmsk}).\\
    \end{aligned}
\end{equation*}

$\star$ \textit{CPA-Ciphertext Decryption:} $P_{\USet}(\zee)=\Prod{v}{\USet}(\zee+v)=\Sum{i}{[\Usz]} c_i \zee^i$.

\begin{equation*}
    \hspace{-0.5cm}
    \begin{aligned}
        A   &= e(\ct_0, \Prod{i}{[1,\Usz]} U_i^{c_i}) = e(\sbgo^{\bmsk s}, \Prod{i}{[1,\Usz]}\left(u^{\amsk^i}\cdot R_{3,i}\right)^{c_i})\\ 
            &= e(\sbgo, u)^{\bmsk s \Sum{i}{[1,\Usz]} c_i \amsk^i} \\
            &= e(\sbgo, u)^{\bmsk s (P_{\USet}(\amsk)-c_0)} \\
        B   &= e(\ct_1, \sk) = e(\sbgo^{P_{\YSet}(\amsk) s}, u^{\frac{\bmsk}{P_{\XSet}(\amsk)}}) = e(\sbgo,u)^{\bmsk s P_{\YSet\setminus \XSet}(\amsk)}\\ 
            &= e(\sbgo,u)^{\bmsk s P_{\USet}(\amsk)}\\
        (B\cdot A^{-1})^{1/{c_0}} &= \left(e(\sbgo,u)^{\bmsk s c_0}\right)^{1/{c_0}} = e(\sbgo,u)^{\bmsk s}\\
    \end{aligned}
\end{equation*}
Thus, $\hat{\ct}\oplus \hfnc_2((B\cdot A^{-1})^{1/{c_0}})=\msg$.

\subsection{Security} \label{sec:SPE-Sec}
\begin{theorem}\label{thm:SPE-Sec}
    If the $\dgho$ assumption and the $\dght$ assumption hold in $\abG$, $\hfnc_1$ is a collision-resistant hash function and $\hfnc_2$ is a universal hash function, then $\spe$ is an $\IndsCCA$ secure SPE scheme. Precisely, for any $\ppt$ adversary $\AA$ that breaks $\spe$ in the $\IndsCCA$ model making at most $q$ trapdoor queries and at most $\qd$ decryption requests, there exist $\ppt$ algorithms $\AB_1,\AB_2,\AB_3$ such that, 
    \begin{equation*}
        \begin{aligned}
            \Adv{\AA, \IndsCCA}{\spe}&\leq (2\qd+2)\Adv{\AB_1}{\dgho}\\
            &\quad+(m+q+2)\Adv{\AB_2}{\dght}+\qd\cdot \Adv{\AB_3}{\hfnc_1}.\\
        \end{aligned}
    \end{equation*}
\end{theorem}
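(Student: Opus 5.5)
We prove \Cref{thm:SPE-Sec} by a hybrid argument in the dual-system / \dejaq{} style, using games $\game{0},\ldots,\game{\mathrm{final}}$ with $\game{0}$ the real $\IndsCCA$ game and $\win{i}$ the event that $\AA$ wins in $\game{i}$.

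\textbf{Semi-functional ciphertext and decryption oracle.} First we make the challenge ciphertext semi-functional with one $\dgho$ step. Given $T\in\{\sbGo,\sbGot\}$, the reduction knows $\amsk,\bmsk,a,b$ and the committed $\chYSet$. It sets $\ct_0=T^{\bmsk}$, $\ct_1=T^{P_{\chYSet}(\amsk)}$, $\ct_2=T^{\bmsk(a+\tau^* b)}$, and uses $\ssk=\hfnc_2(e(T,u)^{\bmsk})$. Since $u\in\sbGo$, the $\sbGt$-part of $T$ does not affect $\ssk$. Key and decryption queries are answered honestly with $\msk$.

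Next we make the consistency check semi-functional, one decryption query at a time. In the $j$-th such hybrid, the $j$-th query is answered with the check run against $\sbgo^{r}$ replaced by $\sbgot^{r}$. This means the check also tests the $\sbGt$-components, and those components carry the reduced values of $\amsk,\bmsk,a,b \bmod p_2$. If the queried tag $\tau'$ equals $\tau^*$ while the ciphertext differs from the challenge, we have an $\hfnc_1$ collision, which gives the $\qd\cdot\Adv{\AB_3}{\hfnc_1}$ term. Otherwise $\tau'\neq\tau^*$, and the pair $(a+\tau^*b,\,a+\tau'b)\bmod p_2$ is pairwise independent. The only $\sbGt$-information $\AA$ holds about $a,b$ is the challenge's $\ct_2$, so a query whose $\sbGt$-part is non-trivial passes the check with negligible probability. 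Moving to this check costs one $\dgho$ step, and returning (rejecting every query with a nonzero $\sbGt$-component) costs another. Together with the two ciphertext steps, this accounts for $(2\qd+2)\Adv{\AB_1}{\dgho}$. After this, every accepted decryption query has $\ct_0,\ct_1$ in $\sbGo\sbGr$, so answering it requires no $\sbGt$-part of any key and leaks nothing beyond what key queries already reveal.

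\textbf{Semi-functional keys and parameters.} Next, over $m+q+2$ applications of $\dght$, we add $\sbGt$-components to the $U_i$ and to each queried key. Following the \dejaq{} approach of \cite{TCC:Wee16,SCN:GonLibRam18}, $\sbgtr$ is used to re-randomise, and $T\in\sbGor$ or $\sbG$ inserts fresh $\sbGt$-randomness. This turns the $\sbGt$-exponents of $U_i$ and of the key for $\XSet_k$ into $\sum_{j}r_j\cmsk_j^i$ and $\sum_j r_j\bmsk/P_{\XSet_k}(\cmsk_j)$ for fresh $\cmsk_j,r_j$. The key for each query $\XSet_k$ is reduced to the rows $1/(\cmsk+x)$, $x\in\XSet_k$, by partial fractions. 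The masking value $e(\sbgt,u_2)^{\bmsk}$ in $\ssk$ corresponds to the row $\cmsk^0$.

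\textbf{Main obstacle: the final information-theoretic step.} The hard part is showing that the challenge's $\sbGt$-key $e(\sbgt,u_2)^{\bmsk s}$ is uniform given all of $\AA$'s view. We will not argue that the whole matrix $\A$ is non-singular, since the selective$^*$ restriction was needed precisely for that. Instead we show only that the row encoding $P_{\chYSet}$ and $\ssk$ lies outside the span of the other rows. These are the rows $\cmsk^i$ for $i\le m$ from $\mpk$ and the rows $1/P_{\XSet_k}(\cmsk)$. Each $\XSet_k\not\subseteq\chYSet$ contains some $x\notin\chYSet$. After multiplying by $P_{\chYSet}$, the span of the released rows consists of functions that, restricted to the $\cmsk_j$, are polynomials plus simple poles at points $-x\notin -\chYSet$. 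The target row has no such pole, and by the determinant formula of \Cref{lem:SPE-I-GLR18} (with $\ttee$ equal to the number of distinct such $x$) the combined matrix of monomials and poles is non-singular. So the target row is independent of the rest, and $e(\sbgt,u_2)^{\bmsk s}$ is uniformly random in $\AA$'s view. Because $\hfnc_2$ is universal, $\ssk$ is then statistically close to uniform, and $\win{\mathrm{final}}$ has probability exactly $1/2$. Summing the hybrid losses gives the stated bound.
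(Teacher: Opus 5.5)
Your argument fails at the step you identified as the main obstacle, and the failure cannot be repaired. You know that every $\XSet_k\not\subseteq\chYSet$ contributes a pole at some $-x$ with $x\notin\chYSet$, and that monomials and simple poles are linearly independent (\Cref{lem:SPE-I-GLR18}). That only shows each key row \emph{individually} differs from the target. It does not prevent a linear combination of key rows from cancelling all its poles while keeping a nonzero constant term. The span of the released rows is a subspace, and it can contain pure polynomials.

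Take $\chYSet=\{y\}$, $\XSet_1=\{x\}$, $\XSet_2=\{x,y\}$ with $x\neq y$. Both are legal queries when $m\geq 2$. Then
\[
\frac{P_{\chYSet}(\zee)}{P_{\XSet_1}(\zee)}-(y-x)\frac{P_{\chYSet}(\zee)}{P_{\XSet_2}(\zee)}=\frac{\zee+y}{\zee+x}-\frac{y-x}{\zee+x}=1,
\]
so the target row $\cmsk^0$ lies in the span of these two key rows. This is an actual attack on the scheme, not a proof artefact. In the real scheme, $e(\ct_1,\sk_1)\cdot e(\ct_1,\sk_2)^{-(y-x)}=e(\sbgo,u)^{\bmsk s}$, which is exactly the value hashed into $\ssk$. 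So the adversary recovers the challenge message with probability $1$, without making a single decryption query. Equivalently, $\frac{1}{\zee+y}=\frac{1}{\zee+x}-\frac{y-x}{(\zee+x)(\zee+y)}$, so keys for $\{x\}$ and $\{x,y\}$ combine into a key for $\{y\}\subseteq\chYSet$.

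Hence the theorem is false in the unrestricted selective model, and no choice or ordering of hybrids can prove it. The cover-free-type restriction of Chatterjee and Mukherjee is not merely an artefact of their proof: its condition $\XSet_i\setminus\bigcup_{j\neq i}\XSet_j\neq\emptyset$ excludes exactly this pair.

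Your overall strategy matches the paper's, apart from running the consistency-check hybrids before the D\'ej\`a Q key hybrids. The paper's proof of \Cref{lem:SPE-matrix-nonsingular} breaks at the same point. That proof discards repeated queries and three-term claws. It then notes that the remaining pole vectors $(v^{(i)}_1,\ldots,v^{(i)}_Q)$ are nonzero and pairwise distinct, and concludes that $\Fu$ lies outside the row span of $\F'$. That inference is invalid. In the example above, with pole coordinates ordered $(x,y)$, the two key rows of $\F'$ are $(1,0,\ldots,0\mid y-x,0)$ and $(0,0,\ldots,0\mid 1,0)$. No repetition or claw occurs, yet the first row minus $(y-x)$ times the second equals $\Fu$.
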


\begin{proofsketch}
    The proof is established via a hybrid argument. 
    The idea is to modify each game only a small amount, which allows the solver $\AB$ to model the intermediate games properly.
    The hybrid argument is based on Wee's~\cite{TCC:Wee14} porting of $\dejaq$ framework introduced by Chase and Meiklejohn~\cite{EC:ChaMei14}.
    In the first game $\game{0}$, both the challenge ciphertext, {\color {black}consistency check} and secret keys are normal.
    $\game{1}$ differs from $\game{0}$ as we introduce some simplifying natural restrictions.
    $\game{2}$ reformulates the challenge ciphertext to a different representation that will be useful in later games.
    In $\game{3}$, we replace the challenge ciphertext component $\ct_0$ with a random $\sbGo$ element.
    Then, in $\game{4}$, we introduce a semi-functional component in the challenge ciphertext.
    We next define a sub-sequence of games $(\game{5,0,1},\game{5,1,0}, \game{5,1,1}, $ $\game{5,2,0},\game{5,2,1}, \ldots, \game{5,m+q+1,0},$ $\game{5,m+q+1,1})$ to introduce entropy into the semi-functional components of secret key and few related public parameters.
    We denote $\game{4}=\game{5,0,1}$ and $\game{5}=\game{5,m+q+1,1}$.
    Note that till this point, we mostly have followed the proof of~\cite{RSA:ChaMuk19} and from now on, we argue the proof differently.
    We show that the above sub-sequence of games effectively introduces enough entropy in the semi-functional component of $U_0$ such that we can replace it with pure random choice in $\game{6}$.
    This argument, unlike \cite{RSA:ChaMuk19}, allows any secret key extraction query provided it does not collide with the natural restriction.
    We next define a sub-sequence of games $(\game{7,0,3}, \game{7,1,1}, \game{7,1,2},$\\ $\game{7,1,3}, \ldots,\game{7,\qd,1},\game{7,\qd,2},\game{7,\qd,3})$ to introduce entropy into the \emph{consistency check}.
    We denote $\game{6}=\game{7,0,3}$ and $\game{7}=\game{7,\qd,3}$.
    Finally, in $\game{8}$, we show that semi-functional components as a whole supply enough entropy to hide  $\ssk$.
    Let us denote $\win{i}$ be the event that $\AA$ has won the game $\game{i}$.

    Note that, above, we have mentioned that till $\game{5}$, we have mostly followed the argument of \cite{RSA:ChaMuk19}.
    To detail the novelty of our work, we recall their technique to point out the difference.
    In particular, in $\game{6}$, Chatterjee and Mukherjee \cite{RSA:ChaMuk19} replaced all the semi-functional exponents with purely random variables. 
    We explain their four-step proof technique abusing the notations a bit.
    \begin{enumerate}
        \item To represent the semi-functional components of the queried secret keys $\sk_{\XSet_1}$, $\ldots$, $\sk_{\XSet_{q}}$ in terms for a linear system of equations where $\hat{\D}$ denotes the linear transformations.
        \item To show the matrix $\hat{\D}$ in \Cref{eq:hat-D} is non-singular where $Q=|\cup_{i\in[q]} X_i|$. 
        \item To show the matrix $\hat{\D}'$ 
        in \Cref{eq:hat-D-dash} can be achieved via elementary row operations on $\hat{\D}$. The goal is to show $\hat{\D}'$ is also full-rank.
        \item Finally, to show the matrix $\hat{\D}''$ in \Cref{eq:hat-D-dash-dash}
        is a full-rank matrix.
    \end{enumerate}
    \begin{figure}
    \begin{scriptsize}
        \vspace{-0.5cm}
        \begin{multicols}{2}
            \begin{equation}
                \hat{\D}=\left[
                    \begin{matrix}
                        \frac{1}{(\zee_1+x_1)} &  \ldots & \frac{1}{(\zee_1+x_Q)}\\
                        \vdots & \ddots & \vdots \\
                        \frac{1}{(\zee_Q+x_1)} &  \ldots & \frac{1}{(\zee_Q+x_Q)}\\
                    \end{matrix}
                \right] \label{eq:hat-D},
            \end{equation}
            \break 
            \begin{equation}
                \hspace{-0.2cm}
                \hat{\D}'=\left[
                    \begin{matrix}
                        \frac{1}{(P_{\XSet_1}(\zee_1))} &  \ldots & \frac{1}{(P_{\XSet_q}(\zee_1))}\\
                        \vdots & \ddots & \vdots \\
                        \frac{1}{(P_{\XSet_q}(\zee_Q))} &  \ldots & \frac{1}{(P_{\XSet_q}(\zee_Q))}\\
                    \end{matrix}
                \right] \label{eq:hat-D-dash}, 
            \end{equation}
            \break 
            \begin{equation}
                \hat{\D}''=\left[
                    \begin{matrix}
                        1 & \ldots & 1\\
                        \frac{1}{(P_{\XSet_1}(\zee_1))} &  \ldots & \frac{1}{(P_{\XSet_q}(\zee_1))}\\
                        \vdots & \ddots & \vdots \\
                        \frac{1}{(P_{\XSet_q}(\zee_Q))} &  \ldots & \frac{1}{(P_{\XSet_q}(\zee_Q))}\\
                    \end{matrix}
                \right] \label{eq:hat-D-dash-dash}
            \end{equation}
          \end{multicols}
    \end{scriptsize}        
    \end{figure}

    Their fourth step extensively utilizes the restrictions imposed by their so-called \emph{selective$^*$ security model}.

    We, however, notice that the fourth step of the argument of \cite{RSA:ChaMuk19} is sufficient but not necessary.
    In particular, for the argument to hold, $\hat{\D}'$ need not be non-singular.
    Looking ahead, it is sufficient for the security argument to show that a particular row of $\hat{\D}''$ is independent of the rest.
    Since this is a more relaxed requirement on $\hat{\D}''$ (and subsequently on $\hat{\D}'$ and $\hat{\D}$), we could allow even \emph{clawed} query sequence.
    Therefore, we could remove the \emph{cover-free set} restrictions imposed in \cite{RSA:ChaMuk19}, which was completely unrealistic for any practical consideration. 


We revisit from the notion of \emph{cover-free sets} given by \cite{RSA:ChaMuk19} to present a justification that this restriction is unnecessary.
As has been pointed out, such a restrictive query sequence was introduced to stop queries like $\XSet_1=\{a,b\},\XSet_2=\{b,c\},\XSet_3=\{a,c\}$.
For $\XSet'=\{a,b,c\}$, it is easy to see that $\sk_{\XSet'}$ can be computed via two paths here: $(i)$ pair $(\sk_{\XSet_1},\sk_{\XSet_2})$, $(ii)$ pair $(\sk_{\XSet_2},\sk_{\XSet_3})$ where $sk_{\XSet}$ is defined using $\frac{1}{P_{\XSet}(\zee)}=\frac{1}{\Prod{x}{\XSet} (\zee+x)}$.
As was discussed in Introduction, this follows from the fact that $\frac{1}{(\zee+a)(\zee+b)(\zee+c)} = \frac{1}{c-a}\cdot\frac{1}{(\zee+a)(\zee+b)}+\frac{1}{a-c}\cdot\frac{1}{(\zee+b)(\zee+c)}=\frac{1}{b-a}\cdot\frac{1}{(\zee+a)(\zee+c)}+\frac{1}{a-b}\cdot\frac{1}{(\zee+b)(\zee+c)}$. 
As \cite{RSA:ChaMuk19} wanted to argue that all the rows of $\hat{\D}'$ are independent, such a query sequence imposes a roadblock.
We, on the other hand, argue that a particular row of $\hat{\D}'$ is independent of $\sk_{\XSet_1},\sk_{\XSet_2},\sk_{\XSet_3}$.
More precisely, we argue that $(11\ldots 1)$ is independent of $\sk_{\XSet_1},\sk_{\XSet_2},\sk_{\XSet_3}$ where $sk_{\XSet_i}$ is defined using $\frac{1}{P_{\XSet_i}(\zee)}=\frac{1}{\Prod{x}{\XSet_i} (\zee+x)}$.
This simple yet critical observation allowed us to drop the restrictions on the queried sets.
However, this introduces some challenges on its own as we allow significantly more key extraction queries with overlapping sets.
We formulate the argument completely next.


\end{proofsketch}

\begin{proof}[Proof of \Cref{thm:SPE-Sec}] We define the hybrid games next.

    \begin{description}
        \item[$\game{0}$.] This is same as the real game.
    
        \item[$\game{1}$.] The following natural assumptions are made on the game.
        \begin{itemize}
            \item For all $z\in (\YSet\cup \bigcup_{i\in[q]} \XSet_i)$, $(\amsk+z)$ is not divisible by $\prm_1$. Otherwise, $\AB$ can easily solve the subgroup decision assumption $\dgho$ by computing $\mathsf{gcd}((\amsk+z), N)$.
            
            \item For all distinct $i,j\in[q]$, for all $x, x'\in \XSet_{i,j}$, if $x\neq x \mod{N}$ then $x\neq x' \mod{p_2}$. Otherwise, $\AB$ can easily solve the subgroup decision assumption $\dght$ by computing $\mathsf{gcd}((x-x'), N)$.
        \end{itemize}
        \vspace{.1cm}
        Therefore, $|\prob{\win{1}}-\prob{\win{0}}|\leq \Adv{{\AB_1}}{\dgho} + \Adv{{\AB_2}}{\dght}$. 

        \item[$\game{2}$.] {A conceptual change to $\game{1}$ is performed here. Given the challenge $\YSet=\{y_1,\ldots,y_{\Ysz}\}$, 
        pick $\amsk, \tbmsk,a,b, u\sample \ZN^4\times \sbGo$. 
        Define the polynomial $P_{\YSet}(\zee)=\Prod{y}{\YSet}(\zee+y)$ and set $\bmsk={\tbmsk}\cdot P_{\YSet}(\amsk)\mod{N}$.
        This affects only $\sbgo^{\bmsk},\sbgo^{\bmsk a},\sbgo^{\bmsk b}$ in $\mpk$.
        The rest of the public parameters in $\mpk$ are defined the same as $\game{1}$.
        The secret keys corresponding to $\XSet_i$ is $\sk_i=u^{\frac{{\tbmsk}\cdot P_{\YSet}(\amsk)}{P_{\XSet_i}(\amsk)}}\cdot X_{3,i}$ for $i\in[q]$.
        The ciphertext is \[\hat{\ct}=\msg^{\xpb}\oplus\ssk, \ct_0=\sbgo^{s{\tbmsk}P_{\YSet}(\amsk)}, \ct_1=\ct_0^{1/{\tbmsk}},\ct_2 =\ct_0^{(a+\tau b)}\] 
        where $\ssk=\hfnc_2(e(\ct_0, U_0))$, $\tau=\hfnc_1(\hat{\ct},\ct_0,\ct_1)$ and $U_0=u\cdot R_3$ for $R_3\sample \sbGr$.
        Note that, even after replacing $\bmsk$ with ${\tbmsk}\cdot P_{\YSet}(\amsk)\mod{N}$ the ciphertext distribution stays the same as ${\tbmsk}$ is uniformly random and $P_{\YSet}(\amsk)\neq 0 \mod{\prm_1}$.
        Therefore, $\prob{\win{2}}=\prob{\win{1}}$.}

        \item[$\game{3}$.] Here again, a conceptual change is performed to $\game{2}$. 
        Choose $\ct_0\sample \sbGo$.
        The rest of the ciphertext is defined as in $\game{2}$. 
        As all of $\ssk$, $\ct_1$ and $\ct_2$ are functions of $\ct_0$, namely $\ssk=\hfnc(e(\ct_0, U_0))$, $\ct_1=\ct_0^{1/{\tbmsk}}$ and $\ct_2 =\ct_0^{(a+\tau b)}$ where $\tau=\hfnc_1(\hat{\ct},\ct_0,\ct_1)$, such a replacement does not change the distribution of ciphertext.
        Therefore, $\prob{\win{3}} = \prob{\win{2}}$.

        \item[$\game{4}$.] Here, 
        $\ct_0$ is chosen from the group $\sbGot$ uniformly at random. 
        The rest of the ciphertext and secret keys are generated similarly to $\game{3}$. 
        We argue that such a change is ``invisible'' to any $\ppt$ adversary due to the subgroup decision assumption $(\dgho)$ i.e., $\abslt{\prob{\win{4}}-\prob{\win{3}}}\leq \Adv{\AB}{\dgho}$.
        We describe a reduction informally to provide a proof sketch here.
        Given a $\dgho$ problem instance, 
        $\AB$ chooses $\amsk,{\tbmsk},a,b\sample \ZN$ to compute $\mpk$ similar to $\game{3}$.
        The challenge phase uses the target $T$ of $\dgho$ problem instance to simulate $\ct_0$.
        If $T$ was from $\sbGo$, the $\ct_0$ is normal whereas if $T$ was from $\sbGot$, the $\ct_0$ is semi-functional.
        Since $\ct_0$ determines the challenge ciphertext completely, the distribution from which $T$ was chosen determines if the challenge ciphertext is normal or semi-functional.

        \item[$\game{5}$.] 
        {Here, the secret keys $\sk_i$ for all $i\in[q]$ are changed gradually to make them semi-functional. 
        In the process, $U_i$ in $\mpk$ for all $i~\in~[m]$ are also modified gradually. 
        Precisely, we change the public parameter $U_i$ from 
        \[u^{\amsk^i}\cdot R_{3,i} \text{\ \ \ to \ \ \ }u^{\amsk^i}\cdot \sbgt^{\sum_{j\in[\colcnt]} \kyr[j]\amsk_j^i} \cdot R'_{3,i}\] 
        and the secret key $\sk_i$ for each queried set $\XSet_i$ is changed from 
        \[\hspace{-0.4cm}u^{\frac{\tbmsk\cdot P_{\YSet}(\amsk)}{P_{\XSet_i}(\amsk)}}\cdot X_{3,i} \text{\ \ to \ \ }u^{\frac{\tbmsk\cdot P_{\YSet}(\amsk)}{P_{\XSet_i}(\amsk)}}\cdot\sbgt^{\sum_{j\in[\colcnt]} \frac{\kyr[j] \cdot\tbmsk\cdot P_{\YSet}(\amsk_j)}{P_{\XSet_i}(\amsk_j)}} \cdot X'_{3,i}\] 
        for $\kyr[1],\ldots, \kyr[\colcnt], \amsk_1,\ldots,$ $\amsk_{\colcnt}\sample \ZN$.
        This is done via sequence of intermediate games namely $\game{{5,1,0}}$, $\game{{5,1,1}}$, $\ldots$,$\game{{5,\colcnt,0}}$, $\game{{5,\colcnt,1}}$. 
        We denote $\game{4}$ by $\game{{5,0,1}}$ and $\game{5}$ by $\game{{5,\rowcnto+\rowcntt+1,1}}$.

        \begin{itemize}
            \item {In $\game{{5,k,0}} (k\in[\colcnt])$, the public parameter $U_i$ for $i\in[m]$ is changed as follows. 
            We also similarly change $U_0$.
            \[
            u^{\amsk^i}\cdot \sbgt^{\sum_{j\in[k-1]} \kyr[j]\amsk_j^i} R'_{3,i} \rightarrow u^{\amsk^i}\cdot \boxed{\sbgt^{\kyr\amsk^i}}\cdot \sbgt^{\sum_{j\in[k-1]} \kyr[j]\amsk_j^i} R'_{3,i}.\]
            For all $i\in[q]$, the $i^{th}$ secret key $\sk_i$ is changed as follows.
            \begin{multline}\label{eqn:SPE-I-key-i}
                u^{\frac{\tbmsk\cdot P_{\YSet}(\amsk)}{P_{\XSet_i}(\amsk)}}\cdot\sbgt^{\sum_{j\in[k-1]} \frac{\kyr[j] \cdot\tbmsk\cdot P_{\YSet}(\amsk_j)}{P_{\XSet_i}(\amsk_j)}} X'_{3,i} \\
                \rightarrow u^{\frac{\tbmsk\cdot    P_{\YSet}(\amsk)}{P_{\XSet_i}(\amsk)}} \cdot\boxed{\sbgt^{\frac{\kyr\cdot\tbmsk\cdot P_{\YSet}(\amsk)}{P_{\XSet_i}(\amsk)}}}\cdot
                    \sbgt^{\sum_{j\in[k-1]} \frac{\kyr[j] \cdot\tbmsk\cdot P_{\YSet}(\amsk_j)}{P_{\XSet_i}(\amsk_j)}} X'_{3,i}.
            \end{multline}
       
            \item  {In $\game{{5,k,1}} (k\in[0,\colcnt])$, the parameters $\{U_i\}_{i\in[0,m]}$ 
            and secret key $\{\sk_i\}_{i\in[q]}$ distributions are respectively given by,
            \begin{equation*}
                \begin{aligned}
                    U_i&=u^{\amsk^i}\cdot \sbgt^{\sum_{j\in[k]} \kyr[j]\amsk_j^i} R'_{3,i} \\
                    \sk_i&=u^{\frac{\tbmsk\cdot P_{\YSet}(\amsk)}{P_{\XSet_i}(\amsk)}}\cdot \sbgt^{\sum_{j\in[k]} \frac{\kyr[j] \cdot\tbmsk\cdot P_{\YSet}(\amsk_j)}{P_{\XSet_i}(\amsk_j)}} X'_{3,i}. \\
                \end{aligned}
            \end{equation*}
            }   
        }
        \end{itemize}
        Here, we argue that $\Adv{\AA}{\game{{5,k,0}}} = \Adv{\AA}{\game{{5,k,1}}}$.
        Firstly, the public parameters except $\{G_i,U_i\}_{i\in[m]}$ do not involve $\amsk$.
        Then, $\ \amsk\!\!\mod{p_1}$ in $\{G_i,U_i\}_{i\in[m]}$ does not leak any information regarding $\ \amsk\!\!\mod{p_2}$ due to \emph{Chinese Remainder Theorem}. 
        Finally, $\ct_0$ is chosen uniformly at random from $\sbGot$, it is completely independent of $\ \amsk\!\!\mod{p_2}$.
        Thus, the changes between $\game{{5,k,0}}$ and $\game{{5,k,1}}$ is invisible to any (unbounded) adversary $\AA$.

        \vspace{.1cm}
        \Cref{lem:SPE-I-k1-k0} argues that $\game{{5,k-1,1}}$ and $\game{{5,k,0}}$ are indistinguishable for any $k\in[\colcnt]$ under the $\dght$ assumption.
        This concludes that $|\prob{\win{5}}-\prob{\win{4}}|\leq (\colcnt)\cdot \Adv{\AB}{\dght}$.

        \begin{lemma}\label{lem:SPE-I-k1-k0}{There exists a $\ppt$ adversary $\AB$ such that, \[|\prob{\win{{5,k-1,1}}}-\prob{\win{{5,k,0}}}|\leq \Adv{\AB}{\dght}.\]}
        \end{lemma}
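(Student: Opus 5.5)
The plan is a standard $\dght$ reduction in the Déjà Q style. On input $(\sbgo,\sbgr,\sbgot,\sbgtr,T)$, $\AB$ uses $T$ to plant a fresh $\sbGt$-component of the form $\sbgt^{\kyr[k]\amsk^i}$ in every $U_i$ and every key. If $T\sample\sbGor$, $\AB$ simulates $\game{{5,k-1,1}}$; if $T\sample\sbG$, it simulates $\game{{5,k,0}}$. The idea is that the $\sbGt$-part of $T$, when present, plays the role of $\sbgt^{\kyr[k]}$, and its $\sbGr$-part is absorbed into the $\sbGr$ randomizers.

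\textbf{Setup and challenge.} $\AB$ first receives $\YSet$ from $\IniT$. It samples $\amsk,\tbmsk,a,b\sample\ZN$, $u=\sbgo^{\eta}$ for $\eta\sample\ZN$, and sets $\bmsk=\tbmsk P_{\YSet}(\amsk)$. It also samples $\kyr[1],\ldots,\kyr[k-1],\amsk_1,\ldots,\amsk_{k-1}\sample\ZN$. The elements $\sbgt^{\kyr[j]\amsk_j^i}$ for $j<k$ are not computable in isolation, since $\AB$ has no generator of $\sbGt$ alone. $\AB$ therefore produces them inside $\sbgtr^{\sum_{j<k}\kyr[j]\amsk_j^i}$. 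The $\sbGr$-part of this element is uniform and independent, because the exponent is reduced mod $p_3$ against fresh randomness; to make this rigorous, $\AB$ multiplies in fresh $\sbgr$-powers.

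$\AB$ then sets
\[U_i=\sbgo^{\eta\amsk^i}\cdot T^{\amsk^i}\cdot\sbgtr^{\sum_{j<k}\kyr[j]\amsk_j^i}\cdot\sbgr^{\rho_i}\]
for $i\in[0,m]$, and computes the remaining public parameters (all in $\sbGo$) directly from $\sbgo$. For a key on $\XSet_i$, it outputs
\[\sk_i=\sbgo^{\eta\tbmsk P_{\YSet}(\amsk)/P_{\XSet_i}(\amsk)}\cdot T^{\tbmsk P_{\YSet}(\amsk)/P_{\XSet_i}(\amsk)}\cdot\sbgtr^{\sum_{j<k}\kyr[j]\tbmsk P_{\YSet}(\amsk_j)/P_{\XSet_i}(\amsk_j)}\cdot\sbgr^{\rho'_i}.\]
By the $\game{1}$ restriction, $P_{\XSet_i}(\amsk)$ is invertible mod $p_1$ and mod $p_2$. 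The $\sbGo$-part of $T$ merely re-randomizes $u$: writing $T=\sbgo^{t_1}\sbgt^{t_2}\sbgr^{t_3}$, the effective $u$ is $\sbgo^{\eta+t_1}$, which is consistent across all components because each uses the same exponent pattern in $\amsk$. For the challenge, $\AB$ takes $\ct_0=\sbgot^{s}$ for $s\sample\ZN$, which is uniform in $\sbGot$. It sets $\ct_1=\ct_0^{1/\tbmsk}$ and $\ct_2=\ct_0^{a+\tau b}$, and computes $\ssk=\hfnc_2(e(\ct_0,U_0))$ directly from $U_0$, so no knowledge of $\sbGt$-exponents is needed.

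\textbf{Decryption queries.} $\AB$ knows $\msk$ in the $\sbGo$ sense, all of $\amsk,\bmsk,a,b$, and can generate keys as above. It therefore answers $\Dec(\mpk,\Kgen(\msk,\XSet),\Ct)$ by running the real algorithm with a simulated key, including the consistency check in $\game{5}$ (which is still the normal check). This step is also the main obstacle. If $\XSet\subseteq\YSet$, a key for $\XSet$ is not a legal key query, yet $\AB$ must still produce it. The formula above yields such a key, but its $\sbGt$-part then carries $P_{\YSet}(\amsk)/P_{\XSet}(\amsk)$, a polynomial, which is consistent with the games' definition of keys. So there is no problem provided the games answer decryption with keys of the same form, and I would make sure the hybrid definition specifies that. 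The other point to check is that the $\sbGr$ components are distributed exactly as in the games, since $T$'s $\sbGr$-part is correlated across components. I would handle this by adding independent uniform $\sbgr^{\rho}$ to every component.

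\textbf{Conclusion.} When $t_2=0$, the view is exactly $\game{{5,k-1,1}}$. When $t_2$ is uniform mod $p_2$, setting $\kyr[k]:=t_2$ gives exactly $\game{{5,k,0}}$. Hence $\AB$'s advantage equals $|\prob{\win{{5,k-1,1}}}-\prob{\win{{5,k,0}}}|$, which proves the lemma.
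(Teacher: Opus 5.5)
Your reduction is essentially the paper's: $T$ is raised to $\amsk^i$ in each $U_i$ and to $\tbmsk P_{\YSet}(\amsk)/P_{\XSet_i}(\amsk)$ in each key, so that its $\sbGo$-part plays the role of $u$ and its $\sbGt$-part plays $\sbgt^{\kyr[k]}$. As in the paper, $\sbgtr$ carries the first $k-1$ semi-functional terms and $\sbgot$ supplies the challenge $\ct_0$. Your extra $\sbgo^{\eta}$ and $\sbgr^{\rho}$ factors are harmless, and your explicit handling of decryption queries, which the paper's reduction omits, is a useful addition.

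One small fix: because the effective $u$ is $\sbgo^{\eta+t_1}$, the $\sbGT$ element $e(\sbgo,u)^{\bmsk}$ in $\mpk$ must be computed as $e(\sbgo^{\bmsk},\sbgo^{\eta}T)$, as the paper does via $e(\sbgo,U_0)^{\bmsk}$. Computing it from $\sbgo$ and $\eta$ alone would be inconsistent with the rest of $\mpk$.
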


        We reproduce this proof for completeness in \Cref{sec:Deferred}.
    }

    \item[$\game{6}$.] In this game, we simulate $U_0=u\cdot g_2^{z_0}$ where $z_0\sample\Zp[2]$. 
    We argue that, $\Adv{\AA}{\game{{6}}} = \Adv{\AA}{\game{{5}}}$ for any (unbounded) adversary $\AA$. In the following discussion, by \emph{semi-functional exponent} of any group element, we denote the exponent present in the $\sbGt$ subgroup. For example, the semi-functional exponent of $U_0$ in $\game{6}$ is $z_0$. We also look at semi-functional exponents of $\{U_i\}_{i\in[m]}$ and $\{\sk_j\}_{j\in[q]}$. Assume that $z_1,\ldots,z_m$ denotes the semi-function exponents of $U_1,\ldots,U_m$ respectively and $z_{m+1},\ldots,$ $z_{m+q}$ denotes the semi-function exponents of $\sk_1,\ldots,\sk_q$ respectively. We gather these semi-functional exponents as a linear system of equations $(\z=\A\r)$ in \Cref{eqn:SPE-I-matrix}. To show that $z_0$ is uniformly random and independent of $z_1,\ldots,z_{m+q}$, it suffices to show that the first row of $\A$ is independent of the rest of the rows of $\A$. This is because, if $\c^\top\notin\Span(\M)$ then $\c^\top\r$ is completely independent of $\M\r$.
    Let us assume $\A=\iCol{\Au}{\Ad}$ where $\Au$ is the first row of $\A$, and $\Ad$ is a matrix that contains rest of the rows of $\A$.

    \begin{equation}\label{eqn:SPE-I-matrix}
        \hspace{0.6cm}
        \underbrace{\left[\begin{matrix}z_0\\ z_1\\\vdots\\ z_\rowcnto\\ z_{\rowcnto+1}\\\vdots \\z_{\rowcnto+\rowcntt}\end{matrix}\right]}_{\z} = 
    \underbrace{\left[\begin{matrix} 
        1 & 1 & \ldots & 1 \\
        \amsk_1 & \amsk_2 & \ldots & \amsk_{m+q+1}\\
        \amsk_1^2 & \amsk_2^2 & \ldots & \amsk_{m+q+1}^2\\
        \vdots & \vdots & \ddots & \vdots \\
        \amsk_1^m & \amsk_2^m & \ldots & \amsk_{m+q+1}^m\\
        \frac{\tbmsk P_{\YSet}(\amsk_1)}{P_{\XSet_1}(\amsk_1)} & \frac{\tbmsk P_{\YSet}(\amsk_2)}{P_{\XSet_1}(\amsk_2)} & \ldots & \frac{\tbmsk P_{\YSet}(\amsk_{m+q+1})}{P_{\XSet_1}(\amsk_{m+q+1})}\\
        \frac{\tbmsk P_{\YSet}(\amsk_1)}{P_{\XSet_2}(\amsk_1)} & \frac{\tbmsk P_{\YSet}(\amsk_2)}{P_{\XSet_2}(\amsk_2)} & \ldots & \frac{\tbmsk P_{\YSet}(\amsk_{m+q+1})}{P_{\XSet_2}(\amsk_{m+q+1})}\\
        \vdots & \vdots & \ddots & \vdots \\
        \frac{\tbmsk P_{\YSet}(\amsk_1)}{P_{\XSet_q}(\amsk_1)} & \frac{\tbmsk P_{\YSet}(\amsk_2)}{P_{\XSet_q}(\amsk_2)} & \ldots & \frac{\tbmsk P_{\YSet}(\amsk_{m+q+1})}{P_{\XSet_q}(\amsk_{m+q+1})}\\
    \end{matrix}\right]}_{\A}\cdot
    \underbrace{\left[\begin{matrix}\kyr[1]\\ \kyr[2]\\\vdots\\ \kyr[\colcnt]\end{matrix}\right]}_{\r}.
    \end{equation}

    \begin{lemma}\label{lem:SPE-matrix-nonsingular}
        $\Au\notin \Span(\Ad)$.
    \end{lemma}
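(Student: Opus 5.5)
The plan is to read every row of $\A$ as a function of one variable evaluated at the points $\amsk_1,\ldots,\amsk_{m+q+1}$, and to split the claim into two parts.
\begin{itemize}
\item[(a)] An algebraic statement about functions: over $\Zp[2]$, the constant function $1$ is not in the span of the rational functions $\zee,\zee^2,\ldots,\zee^m$ and $\frac{P_{\YSet}(\zee)}{P_{\XSet_1}(\zee)},\ldots,\frac{P_{\YSet}(\zee)}{P_{\XSet_q}(\zee)}$.
\item[(b)] A transfer statement: evaluating at $m+q+1$ independent random points preserves this.
\end{itemize}
The factor $\tbmsk$ is a common unit and can be dropped. By the restrictions of $\game{1}$, we may assume that all elements of $\YSet\cup\bigcup_i\XSet_i$ are pairwise distinct modulo $p_2$, so all poles below are simple and distinct.

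For (b), let $f_0=1$ and let $f_1,\ldots,f_{m+q}$ be the other functions. Choose a maximal subfamily $\mathcal{F}$ of $\{f_1,\ldots,f_{m+q}\}$ that is linearly independent as functions. By (a), $\{1\}\cup\mathcal{F}$ is also independent. It has $d+1\le m+q+1$ members. Clear all denominators by multiplying through by $\prod_{x\in\bigcup_i\XSet_i}(\zee+x)$. Then some $(d+1)\times(d+1)$ minor of the evaluation matrix at $\amsk_1,\ldots,\amsk_{d+1}$ is a nonzero polynomial in these variables: the generalized-Vandermonde determinant of independent functions does not vanish identically, in the same spirit as \Cref{lem:SPE-I-GLR18}. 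The points $\amsk_j$ are uniform in $\Zp[2]$ and independent of everything the adversary sees modulo $p_2$, by the Chinese Remainder Theorem argument used for $\game{5,k,0}\to\game{5,k,1}$. So by Schwartz--Zippel this minor is nonzero except with probability about $\mathsf{poly}(m,q)/p_2$, which is negligible. Every other row of $\Ad$ is, as a function, a combination of $\mathcal{F}$, so the same combination holds for its evaluation vector. Hence $\Span(\Ad)$ equals the span of the rows for $\mathcal{F}$, and the nonzero minor shows that $\Au$ lies outside it.

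For (a), suppose that
\[1=\sum_{i\in[m]}c_i\zee^i+\sum_{i\in[q]}d_i\frac{P_{\YSet}(\zee)}{P_{\XSet_i}(\zee)}.\]
\begin{enumerate}
\item Rearranging, $P_{\YSet}(\zee)R(\zee)$ is a polynomial, where $R=\sum_i d_i/P_{\XSet_i}$.
\item A pole of $R$ at $-w$ with $w\notin\YSet$ cannot be cancelled by $P_{\YSet}$. So $R$ has poles only at points $-y$ with $y\in\YSet$, and $R$ vanishes at infinity.
\item It therefore suffices to show that no nonzero combination $\sum_i d_i/P_{\XSet_i}$ with every $\XSet_i\not\subseteq\YSet$ has all its poles inside $-\YSet$. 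If that holds, then $R=0$ and the identity reduces to $1=\sum_i c_i\zee^i$, which fails at $\zee=0$.
\end{enumerate}
I would attack step 3 by residues. Expand each $1/P_{\XSet_i}$ in partial fractions as $\sum_{x\in\XSet_i}\frac{1}{P_{\XSet_i\setminus\{x\}}(-x)}\cdot\frac{1}{\zee+x}$. Vanishing of the residue at every $-w$ with $w\notin\YSet$ gives a linear system in the $d_i$. Then induct on $|\bigcup_i(\XSet_i\setminus\YSet)|$: fix one such $w$, and specialise or multiply by $(\zee+w)$ to strip $w$ from the sets containing it.

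This last step is the main obstacle. For overlapping ("clawed") families such as $\{a,b\},\{b,c\},\{a,c\}$ the residue system is genuinely underdetermined. The argument must therefore use the full rational identity, not just one residue per point. A fallback is to specialise $\zee=-y$ for $y\in\YSet\setminus\XSet_i$, which kills the corresponding terms, and combine that with the degree bound $\deg(P_{\YSet}R)<|\YSet|$.
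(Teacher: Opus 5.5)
\textbf{There is a genuine gap, and it cannot be repaired: the claim in your step 3 is false, and with it (a) and the lemma itself.} Take $\YSet=\{a,b\}$ and the two admissible queries $\XSet_1=\{a,w\}$, $\XSet_2=\{b,w\}$ with $w\notin\YSet$, so $\XSet_1,\XSet_2\not\subseteq\YSet$. Partial fractions give
\[
\frac{a-w}{P_{\XSet_1}(\zee)}-\frac{b-w}{P_{\XSet_2}(\zee)}=\left(\frac{1}{\zee+w}-\frac{1}{\zee+a}\right)-\left(\frac{1}{\zee+w}-\frac{1}{\zee+b}\right)=\frac{a-b}{P_{\YSet}(\zee)}.
\]
This is a nonzero combination all of whose poles lie in $-\YSet$, so step 3 fails. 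Multiplying by $P_{\YSet}$,
\[
(a-w)\frac{P_{\YSet}(\zee)}{P_{\XSet_1}(\zee)}-(b-w)\frac{P_{\YSet}(\zee)}{P_{\XSet_2}(\zee)}=\frac{(a-w)(\zee+b)-(b-w)(\zee+a)}{\zee+w}=a-b .
\]
This is an identity of functions, so it holds at every $\amsk_j$. Concretely, $(a-w)$ times the row of $\Ad$ for $\XSet_1$, minus $(b-w)$ times the row for $\XSet_2$, equals $\tbmsk(a-b)\Au$. Hence $\Au\in\Span(\Ad)$ for every choice of evaluation points. For larger $\YSet\supseteq\{a,b\}$ the same combination gives $(a-b)P_{\YSet\setminus\{a,b\}}(\zee)$. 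The monomial rows of $\Ad$ then reduce this to a nonzero constant whenever $P_{\YSet\setminus\{a,b\}}(0)\neq 0$.

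Consequences for your plan:
\begin{itemize}
\item Your transfer step (b) is sound in itself, but it has nothing to transfer.
\item The ``underdetermined residue system'' you flagged for overlapping sets is an actual counterexample, not a technical obstacle.
\item The specialisation/degree-bound fallback cannot rescue it.
\end{itemize}
At the level of the scheme this is a two-key collusion. The element $\left(\sk_{\XSet_1}^{\,a-w}\,\sk_{\XSet_2}^{\,-(b-w)}\right)^{1/(a-b)}$ has $\sbGo$-component $u^{\bmsk/P_{\{a,b\}}(\amsk)}$, which is a working key for $\{a,b\}\subseteq\YSet$.

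\textbf{The paper's own proof fails on the same example.} It writes $\A=\F\D$. It then deduces that $\Fu$ is independent of the key rows of $\F'$ because the vectors $(v_1^{(i)},\ldots,v_Q^{(i)})$ are nonzero and pairwise distinct once repetitions and triple claws are pruned. But pairwise distinct vectors need not be linearly independent. Here $v^{(1)}$ and $v^{(2)}$ vanish at the coordinates of $a$ and $b$, because $P_{\YSet}(-a)=P_{\YSet}(-b)=0$. They are supported only on the coordinate of $w$, with values $b-w$ and $a-w$, while $d_0^{(1)}=d_0^{(2)}=1$. Hence $(a-w)$ times the first key row of $\F$, minus $(b-w)$ times the second, equals $(a-b)\Fu$. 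This pair is neither a repetition nor a claw, so the pruning never removes it. Any correct version of the lemma needs an extra restriction on the admissible key families, or a change to the scheme, that rules out such combinations.
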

        \begin{proof}
            We denote $\Ad=\iCol{\B}{\P}$ where $\B\in\Zp[2]^{\rowcnto\times (\colcnt)}$ is the first $m$ rows of $\Ad$ and $\P\in\Zp[2]^{\rowcntt\times (\colcnt)}$ is last $\rowcntt$ rows of $\Ad$.

            Recall that, for all $i^{th}$ query, $\XSet_i\not\subset \YSet$ due to natural restriction.
            Let us denote $P_{\YSet}(\zee)=\Prod{y}{\YSet} (\zee+y)=\Sum{i}{[0,\Ysz]} C_i \zee^i$ for $C_{\Ysz}=1$.
            Then, 

            \begin{equation*}
                \scriptsize
                \hspace{-0.3cm}\begin{aligned}
                    \frac{P_{\YSet}(\zee)}{P_{\XSet_{\ii}}(\zee)}
                    &= \frac{\Sum{i}{[0,\Ysz]} C_i \zee^i}{\Prod{x_k}{\XSet_{\ii}} (\zee+x_k)} =\Sum{x_k}{X_{\ii}} \frac{1}{\Prodneq{x_{\jj}}{X_{\ii}}{x_{\jj}}{x_k} (x_{\jj}-x_k)}\cdot\frac{\Sum{i}{[0,\Ysz]} C_i \zee^i}{\zee+x_k}\\
                    &=\Sum{x_k}{X_{\ii}} \frac{1}{\Prodneq{x_{\jj}}{X_{\ii}}{x_{\jj}}{x_k} (x_{\jj}-x_k)}\cdot\Bigg(\Summ{j}{[0,\Ysz-i-1]}{i}{[0,\Ysz-1]} (-x_k)^j C_{i+j+1} \zee^i  
                    + \frac{P_{\YSet}(-x_k)}{\zee+x_k}\Bigg)\\
                    &= D_{\YSet,\XSet_{\ii}}(\zee)+V_{\YSet,\XSet_{\ii}}(\zee)\\
                \end{aligned}
            \end{equation*}
            where $C_i,x_{\jj},x_k$ are all non-trivial scalars except with negligible probability, $D_{\YSet,\XSet_{\ii}}(\zee)=$
            \begin{equation*}
                \begin{aligned}
                    \Summ{i}{[0,\Ysz-1]}{x_k}{X_{\ii}} \frac{1}{\Prodneq{x_{\jj}}{X_{\ii}}{x_{\jj}}{x_k} (x_{\jj}-x_k)}\Sum{j}{[0,\Ysz-i-1]} (-x_k)^j C_{i+j+1} \zee^i \\
        \end{aligned}
    \end{equation*}
            and $V_{\YSet,\XSet_i}(\zee)= \Sum{x_k}{X_{\ii}} \frac{1}{\Prodneq{x_{\jj}}{X_{\ii}}{x_{\jj}}{x_k} (x_{\jj}-x_k)}\cdot \frac{P_{\YSet}(-x_k)}{\zee+x_k}$.
            Observe that 
            \begin{equation*}
                \hspace{-0.3cm}
                \begin{aligned}
                    D_{\YSet,\XSet_{\ii}}(\zee)&=\Summ{i}{[0,\Ysz-1]}{x_k}{X_{\ii}} \frac{1}{\Prodneq{x_{\jj}}{X_{\ii}}{x_{\jj}}{x_k} (x_{\jj}-x_k)}\Sum{j}{[0,\Ysz-i-1]} (-x_k)^j C_{i+j+1} \zee^i \\
                    &= \Sum{i}{[0,\Ysz-1]} d_i^{(\ii)} \zee^i \\
                \end{aligned}
            \end{equation*}
            where $d_i^{(\ii)}=\Sum{x_k}{X_{\ii}} \frac{1}{\Prodneq{x_{\jj}}{X_{\ii}}{x_{\jj}}{x_k} (x_{\jj}-x_k)}\Sum{j}{[0,\Ysz-i-1]} (-x_k)^j C_{i+j+1}$ is a non-trivial scalar.
            Also,
            \begin{equation*}
                \begin{aligned}
                    V_{\YSet,\XSet_{\ii}}(\zee)
                    &=\Sum{x_k}{X_{\ii}} \frac{P_{\YSet}(-x_k)}{\Prodneq{x_{\jj}}{X_{\ii}}{x_{\jj}}{x_k} (x_{\jj}-x_k)}\cdot \frac{1}{\zee+x_k}\\ 
                    &= \Sum{x_k}{X_{\ii}} v_k^{(\ii)}\cdot \frac{1}{\zee+x_k}\\
                \end{aligned}
            \end{equation*}
            where $v_k^{(\ii)}=\frac{P_{\YSet}(-x_k)}{\Prodneq{x_{\jj}}{X_{\ii}}{x_{\jj}}{x_k} (x_{\jj}-x_k)}$ is a non-trivial scalar.
            Then, we express $\A=\F\cdot \D$ where,
            {\small
            \begin{equation}\label{eqn:matx-F}
                \F=
                \left[
                    \begin{matrix}
                        1        & 0       & 0       &\ldots& 0       & 0       & 0       &\ldots& 0\\
                        0        & 1       & 0       &\ldots& 0       & 0       & 0       &\ldots& 0\\
                        0        & 0       & 1       &\ldots& 0       & 0       & 0       &\ldots& 0\\
                        \vdots   &\vdots   &\vdots   &\ddots&\vdots   &\vdots   &\vdots   &\ddots&\vdots\\
                        0        & 0       & 0       &\ldots& 1       & 0       & 0       &\ldots& 0\\
                        d_0^{(1)}&d_1^{(1)}&d_2^{(1)}&\ldots&d_m^{(1)}&v_1^{(1)}&v_2^{(1)}&\ldots&v_Q^{(1)}\\
                        d_0^{(2)}&d_1^{(2)}&d_2^{(2)}&\ldots&d_m^{(2)}&v_1^{(2)}&v_2^{(2)}&\ldots&v_Q^{(2)}\\
                        \vdots   &\vdots   &\vdots   &\ddots&\vdots   &\vdots   &\vdots   &\ddots&\vdots\\
                        d_0^{(q)}&d_1^{(q)}&d_2^{(q)}&\ldots&d_m^{(q)}&v_1^{(q)}&v_2^{(q)}&\ldots&v_Q^{(q)}\\
                    \end{matrix}
                \right]
            \end{equation}}
            and 
            \begin{equation}\label{eqn:matx-D}
            \D = 
            \left[
                \begin{matrix}
                1 & 1 & \ldots & 1 \\
                \amsk_1 & \amsk_2 & \ldots & \amsk_{m+Q+1}\\
                \amsk_1^2 & \amsk_2^2 & \ldots & \amsk_{m+Q+1}^2\\
                \vdots & \vdots & \ddots & \vdots \\
                \amsk_1^m & \amsk_2^m & \ldots & \amsk_{m+Q+1}^m\\
                \frac{1}{(\amsk_1+x_1)} & \frac{1}{(\amsk_2+x_1)} & \ldots & \frac{1}{(\amsk_{m+Q+1}+x_1)}\\
                \frac{1}{(\amsk_1+x_2)} & \frac{1}{(\amsk_2+x_2)} & \ldots & \frac{1}{(\amsk_{m+Q+1}+x_2)}\\
                \vdots & \vdots & \ddots & \vdots \\
                \frac{1}{(\amsk_1+x_Q)} & \frac{1}{(\amsk_2+x_Q)} & \ldots & \frac{1}{(\amsk_{m+Q+1}+x_Q)}\\
            \end{matrix}\right]
            \end{equation}
            {where $\Xsz-1\leq m-1$ out of $(v_1^{(i)},v_2^{(i)},\ldots,v_Q^{(i)})$ are non-zero for all $i\in[q]$.}
            Now, if $(i)$ any query triplet $\XSet_{i_1},\XSet_{i_2},\XSet_{i_3}$ result in a \emph{claw} i.e. $\frac{1}{P_{\XSet_{i_3}}(\zee)}=\nu\cdot \frac{1}{P_{\XSet_{i_1}}(\zee)}+\rho\cdot \frac{1}{P_{\XSet_{i_2}}(\zee)}$, then $\frac{P_{\YSet}(\zee)}{P_{\XSet_{i_3}}(\zee)}=\nu\cdot \frac{P_{\YSet}(\zee)}{P_{\XSet_{i_1}}(\zee)}+\rho\cdot \frac{P_{\YSet}(\zee)}{P_{\XSet_{i_2}}(\zee)}$ or $(ii)$ any query is repeated i.e. $\frac{1}{P_{\XSet_{i_4}}(\zee)}=\frac{1}{P_{\XSet_{i_5}}(\zee)}$.
            This translates to 
            {\small
            \begin{equation*}
                \hspace{-0.0cm}
                \begin{matrix}
                    \Big(d_0^{(i_3)}&d_1^{(i_3)}&d_2^{(i_3)}&\ldots&d_m^{(i_3)}&v_1^{(i_3)}&v_2^{(i_3)}&\ldots&v_Q^{(i_3)}\Big)&\\    
                    =\nu\cdot (d_0^{(i_1)}&d_1^{(i_1)}&d_2^{(i_1)}&\ldots&d_m^{(i_1)}&v_1^{(i_1)}&v_2^{(i_1)}&\ldots&v_Q^{(i_1)})&\\
                    +\rho\cdot (d_0^{(i_2)}&d_1^{(i_2)}&d_2^{(i_2)}&\ldots&d_m^{(i_2)}&v_1^{(i_2)}&v_2^{(i_2)}&\ldots&v_Q^{(i_2)}).&\\
                \end{matrix}
            \end{equation*}}
            To argue $\Au\notin\Span(\Ad)$, it suffices to argue that $\Au\notin \Span(\Ad')$ where $\A'=\iCol{\Au}{\Ad}=\F'\cdot\D$ where $\F'$ is same as $\F$ except with the rows that are directly computed (i.e., repetition or results in a claw) from last $q$ rows of $\F$.
            We can run this procedure for all secret key queries and find such a ${\F'}$ that removes all secret key queries on dependent sets. Thus,
            {\small\begin{equation}\label{eqn:matx-FDash}
                \F'=
                \left[
                    \begin{matrix}
                        1        & 0       & 0       &\ldots& 0       & 0       & 0       &\ldots& 0\\
                        0        & 1       & 0       &\ldots& 0       & 0       & 0       &\ldots& 0\\
                        0        & 0       & 1       &\ldots& 0       & 0       & 0       &\ldots& 0\\
                        \vdots   &\vdots   &\vdots   &\ddots&\vdots   &\vdots   &\vdots   &\ddots&\vdots\\
                        0        & 0       & 0       &\ldots& 1       & 0       & 0       &\ldots& 0\\
                        d_0^{(1)}&d_1^{(1)}&d_2^{(1)}&\ldots&d_m^{(1)}&v_1^{(1)}&v_2^{(1)}&\ldots&v_Q^{(1)}\\
                        d_0^{(2)}&d_1^{(2)}&d_2^{(2)}&\ldots&d_m^{(2)}&v_1^{(2)}&v_2^{(2)}&\ldots&v_Q^{(2)}\\
                        \vdots   &\vdots   &\vdots   &\ddots&\vdots   &\vdots   &\vdots   &\ddots&\vdots\\
                        d_0^{(q')}&d_1^{(q')}&d_2^{(q')}&\ldots&d_m^{(q')}&v_1^{(q')}&v_2^{(q')}&\ldots&v_Q^{(q')}\\
                    \end{matrix}
                \right]
            \end{equation}}
            where $q'\leq q$ is the number of queries that did not result in claw.
            Now we show that $\left(\begin{matrix}0&0&\ldots&0\end{matrix}\right)$ is not in the span of $\left(\begin{matrix}v_1^{(i)}&v_2^{(i)}&\ldots&v_Q^{(i)}\end{matrix}\right)$ for all $i\in[q]$.
            Firstly, for all $i\in[q]$, $\left(\begin{matrix}v_1^{(i)}&v_2^{(i)}&\ldots&v_Q^{(i)}\end{matrix}\right)\neq \left(\begin{matrix}0&0&\ldots&0\end{matrix}\right)$ as that would mean semifunctional exponents of $\sk_i$ is $0$.
            Then, for all distinct $i,j\in[q]$, \[\left(\begin{matrix}v_1^{(i)}&v_2^{(i)}&\ldots&v_Q^{(i)}\end{matrix}\right)\neq \left(\begin{matrix}v_1^{(j)}&v_2^{(j)}&\ldots&v_Q^{(j)}\end{matrix}\right)\] as repetition has been removed in the previous step.
            Thus, $\left(\begin{matrix}0&0&\ldots&0\end{matrix}\right)$ not in the span of $\left(\begin{matrix}v_1^{(i)}&v_2^{(i)}&\ldots&v_Q^{(i)}\end{matrix}\right)$ for all $i\in[q]$.
            This in turn argues that \[\Fu=\left(\begin{matrix}1&0&0&\ldots&0&0&0&\ldots&0\end{matrix}\right)\] not in the span of \[\hspace{0.0cm}\left(\begin{matrix}d_0^{(i)}&d_1^{(i)}&d_2^{(i)}&\ldots&d_m^{(i)}&v_1^{(i)}&v_2^{(i)}&\ldots&v_Q^{(i)}\end{matrix}\right)\] for all $i\in[q]$.
            Moreover, $\Fu$ is not in the span of \[\left(\begin{matrix}0&b_1&b_2&\ldots&b_m&s_1&s_2&\ldots&s_Q\end{matrix}\right)\] for all $i\in[m]$.
            Thus, $\Fu$ is independent of all the other rows in $\F'$.

            Due to \Cref{lem:SPE-I-GLR18}, $\D$ is non-singular. Then, $\Fu\cdot\D$ is independent of $\F'\cdot\D$.
            Thus, $\Ad$ is independent of $\Au$.
            Thus, $z_0$ is independent of $(z_1,\ldots,z_{m+q})$ and randomly chosen $r_1,\ldots,r_{m+q+1}$ ensures that $z_0$ is uniformly random as well.
        \end{proof}

    \item[$\game{7}$.] Here, the decryption consistency checks are gradually made semi-functional type-3. To do that, we define the following sequence of security games and argue their indistinguishability: $(\game{7,0,3}, \game{7,1,1}, \game{7,1,2},\game{7,1,3},$ 
    
    $\ldots,\game{7,\qd,1},\game{7,\qd,2},\game{7,\qd,3})$.
    We denote the following too: $\game{6}=\game{7,0,3}$ and $\game{7}=\game{7,\qd,3}$.

    \begin{itemize}
    \item[$\game{7,i,1}$.] Here, the decryption consistency check of $i^{th}$ decryption query is modified.
    On $j^{th}$ decryption query on $(\XSet,\Ct,\YSet)$, 

    {
    \begin{itemize}
        \item $j<i$: It samples $r\sample\ZN$ to define the consistency check to be: Abort if \[\hspace{0cm}e(\ct_2, \sbgo^r)\cdot e(\ct_1, \sbgo^{\bmsk r})\neq e(\ct_0, \sbgo^{(a+\tau'b+P_{\YSet}(\amsk))r}\sbgt^{\delta})\] where $\tau'=\hfnc_1(\hat{\ct},\ct_0,\ct_1)$ and $\delta\sample\ZN$ chosen during $\Setup$.
        \item $j=i$: It samples $\sbgot\sample\sbGot$ to define the consistency check to be:
        Abort if \[\hspace{0cm}e(\ct_2, \sbgot)\cdot e(\ct_1, \sbgot^{\bmsk}) \neq e(\ct_0, \sbgot^{(a+\tau'b+P_{\YSet}(\amsk))})\] where $\tau'=\hfnc_1(\hat{\ct},\ct_0,\ct_1)$.
        \item $j>i$: It samples $r\sample\ZN$ to define the consistency check to be: 
        Abort if \[\hspace{0cm}e(\ct_2, \sbgo^r)\cdot e(\ct_1, \sbgo^{\bmsk r})\neq e(\ct_0, \sbgo^{(a+\tau'b+P_{\YSet}(\amsk))r})\] where $\tau'=\hfnc_1(\hat{\ct},\ct_0,\ct_1)$. 
    \end{itemize}
    }
    We argue that such a change is ``invisible'' to any $\ppt$ adversary due to the subgroup decision assumption $(\dgho)$ i.e., $\abslt{\prob{\win{7,i,1}}-\prob{\win{7,i-1,3}}}\leq \Adv{\AB}{\dgho}$.
    We describe a reduction informally to provide a proof sketch here.
    Given a $\dgho$ problem instance, 
    $\AB$ chooses $\amsk,{\tbmsk},a,b,\delta\sample \ZN$ to compute $\mpk$ similar to $\game{6}$.
    The $i^{th}$ decryption query uses the target $T$ of $\dgho$ problem instance to simulate the consistency check. In particular, the consistency check for $i^{th}$ decryption query is simulated by $e(\ct_2, T)\cdot e(\ct_1, T^{\bmsk})\neq e(\ct_0, T^{(a+\tau'b+P_{\YSet}(\amsk))})$ where $\tau'=\hfnc_1(\hat{\ct},\ct_0,\ct_1)$
    If $T$ was from $\sbGo$, the {\color{black}consistency check} is normal whereas if $T$ was from $\sbGot$, the {\color{black}consistency check} is semi-functional type-1.

    \item[$\game{7,i,2}$.] Here, the decryption consistency check of $i^{th}$ decryption query is further modified. 
    In particular, the consistency check is defined to be: Abort if 
    \[
    e(\ct_2^{(i)}, \sbgo^r\sbgt^r)\cdot e(\ct_1^{(i)},\sbgo^{\bmsk r}\sbgt^{\bmsk r})\neq e(\ct_0^{(i)}, \sbgo^{(a+\tau_i b+P_{\YSet}(\amsk))r}\sbgt^{\delta})\] 
    where $\tau_i=\hfnc_1(\hat{\ct}^{(i)},\ct_0^{(i)},\ct_1^{(i)})$. {We say that the {\color{black}consistency check} is semi-functional type-2.}
    Next, we argue that this change is invisible to any $\ppt$ adversary.
    Let us look at the semi-functional exponent of challenge ciphertext $\Ct=(\hat{\ct},\ct_0, \ct_1,\ct_2)$ and the ciphertext $\Ct_i=(\hat{\ct}^{(i)},\ct_0^{(i)}, \ct_1^{(i)},\ct_2^{(i)})$ in $i^{th}$ decryption query. Here, we mention that due to the Chinese Remainder Theorem, semi-functional exponents are independent of their normal counterparts. In particular, suppose $\ct_0=\sbgo^{s}\sbgt^{s}$ then, $s\mod{p_2}$ is independent to $s\mod{p_1}$.
    To prove our claim, we start with the natural restriction $\Ct\neq \Ct_i$. 
    Thus, 
    \begin{itemize}
        \item $(\hat{\ct},\ct_0,\ct_1)= (\hat{\ct}^{(i)},\ct_0^{(i)}, \ct_1^{(i)})$ and $\ct_2\neq \ct_2^{(i)}$: Due to equality of these three components, the hash value would be same. Thus $\tau=\hfnc_1(\hat{\ct},\ct_0,$ $\ct_1)=\hfnc_1(\hat{\ct}^{(i)},\ct_0^{(i)}, \ct_1^{(i)})=\tau_i$. This ensures $\ct_2= \ct_2^{(i)}$ which is a contradiction.
        \item $(\hat{\ct},\ct_0,\ct_1)\neq (\hat{\ct}^{(i)},\ct_0^{(i)}, \ct_1^{(i)})$: In this case, $\tau=\hfnc_1(\hat{\ct},\ct_0,\ct_1)\neq \hfnc_1(\hat{\ct}^{(i)},\ct_0^{(i)}, \ct_1^{(i)})=\tau_i$ due to collision resistance. Then, $(a+\tau b)\mod{p_2}$ is independent of $(a+\tau_i b)\mod{p_2}$ where $a,b\mod{p_2}$ are uniformly random choices from $\Zp[2]$. 
    \end{itemize}
    Therefore, changing the consistency check 
    \[
    e(\ct_2^{(i)}, \sbgo^r\sbgt^r)\cdot e(\ct_1^{(i)},\sbgo^{\bmsk r}\sbgt^{\bmsk r})\neq e(\ct_0^{(i)}, \sbgo^{(a+\tau_i b+P_{\YSet}(\amsk))r}\sbgt^{\delta})\] 
    with a random quantity $\delta$ will still stay invisible to the adversary.
    
    \item[$\game{7,i,3}$.] Here, the decryption consistency check of $i^{th}$ decryption query is modified.
    On $j^{th}$ decryption query on $(\XSet,\Ct,\YSet)$, 
    \begin{itemize}
        \item $j<i$: It samples $r\sample\ZN$ to define the consistency check to be: Abort if 
        \[\hspace{-0.3cm}e(\ct_2, \sbgo^r)\cdot e(\ct_1, \sbgo^{\bmsk r})\neq e(\ct_0, \sbgo^{(a+\tau'b+P_{\YSet}(\amsk))r}\sbgt^{\delta})\] where $\tau'=\hfnc_1(\hat{\ct},\ct_0,\ct_1)$ and $\delta\sample\ZN$ chosen during $\Setup$.
        \item $j=i$: It samples $\sbgo\sample\sbGo$ to define the consistency check to be:
        Abort if 
        \[\hspace{-0.3cm}e(\ct_2, \sbgo)\cdot e(\ct_1, \sbgo^{\bmsk}) \neq e(\ct_0, \sbgo^{(a+\tau'b+P_{\YSet}(\amsk))}\sbgt^{\delta})\] where $\tau_i=\hfnc_1(\hat{\ct}^{(i)},\ct_0^{(i)},\ct_1^{(i)})$ and $\delta\sample\ZN$ chosen during $\Setup$.
        \item $j>i$: It samples $r\sample\ZN$ to define the consistency check to be: 
        Abort if 
        \[\hspace{-0.3cm}e(\ct_2, \sbgo^r)\cdot e(\ct_1, \sbgo^{\bmsk r})\neq e(\ct_0, \sbgo^{(a+\tau'b+P_{\YSet}(\amsk))r})\] where $\tau'=\hfnc_1(\hat{\ct},\ct_0,\ct_1)$. 
    \end{itemize}
    We argue that such a change is ``invisible'' to any $\ppt$ adversary due to the subgroup decision assumption $(\dgho)$ i.e., $\abslt{\prob{\win{7,i,2}}-\prob{\win{7,i,3}}}\leq \Adv{\AB}{\dgho}$.
    We describe a reduction informally to provide a proof sketch here.
    Given a $\dgho$ problem instance, 
    $\AB$ chooses $\amsk,{\tbmsk},a,$ $b,\delta\sample \ZN$ to compute $\mpk$ similar to $\game{6}$.
    The $i^{th}$ decryption query uses the target $T$ of $\dgho$ problem instance to simulate the consistency check. In particular, the consistency check for $i^{th}$ decryption query is simulated by $e(\ct_2^{(i)}, T)\cdot e(\ct_1^{(i)}, T^{\bmsk})\neq e(\ct_0^{(i)}, T^{(a+\tau_i b+P_{\YSet}(\amsk))})$ where $\tau_i=\hfnc_1(\hat{\ct}^{(i)},\ct_0^{(i)},\ct_1^{(i)})$
    If $T$ was from $\sbGot$, the {\color{black}consistency check} is semi-functional type-2 whereas if $T$ was from $\sbGo$, the {\color{black}consistency check} is semi-functional type-3.
    \end{itemize}

    \item[$\game{8}$.] Here, we replace $\ssk=\hfnc(e(\ct_0, U_0))$ by a uniform random choice from $\Ssk$.
    The reason behind this is $U_0$ now is $u\cdot \sbgt^{z_0}\cdot R_{3}$.
    Due to $\game{6}$, $z_0$ is an uniformly random element independent of all $\{z_i\}_{i\in[q+m]}$.
    Thus $e(\ct_0, U_0)=e(\ct_0, u)\cdot e(\ct_0, \sbgt^{z_0})$ has $\log{p_2}$ bits of min-entropy due to $z_0\!\! \mod{p_2}$.
    Due to left-over hash lemma, $\ssk=\hfnc(e(\ct_0, U_0))$ is at most $2^{-{\secpp}}$ distant from uniform random choice from $\Ssk$ provided $\sbGt$ component in $\ct_0$ is not identity.
    The probability that the $\sbGt$ component of $\ct_0$ is $1$ is $1/{p_2}$. 
    Therefore $\abslt{\prob{\win{8}}-\prob{\win{7}}}\leq 1/p_2 + 2^{-{\secpp}}$.
    $\ssk$ now is a uniformly random element and it hides $\msg_{\bee}$ completely i.e. $\prob{\win{8}}=1/2$.
    \end{description}    
\end{proof}

\section{Applications} \label{sec:Applications}

\def\tosz{n}
\def\ttsz{n}
\def\pr{[\mathsf{P}]}

We now present \emph{the first} CCA-secure WIBE, CCA-secure WKD-IBE and CCA-secure DNF-ABE with constant-size secret keys and ciphertexts.
For this, we make use of the black-box transformations introduced by Katz \etal \cite{CANS:KMMS17}.
Chatterjee and Mukherjee \cite{RSA:ChaMuk19} applied these transformations on their constructions to achieve these well-known predicate encryptions as well.
However, as we already have mentioned, none of the existing constructions neither achieved constant-size ciphertext and secret key nor they achieved IND-CCA security.
We recall these transformations next and report the efficiency of the schemes.
Applying these transformations on $\spe$ (from \Cref{sec:Construct}) results in first CCA-secure WIBE, WKD-IBE and DNF-ABE with a constant-size secret key and ciphertext.

\paragraph{WIBE.}
Let the identity space is $\ID=\{0,1,*\}^n$. WIBE first converts an $n$-bit identity $\id$ into an $2n$-bit string $T_{\id}$.  Then we define $S_{\id}=\{i\in {[2n]}: T_{\id}[i]=1\}$.
Then $\Kgen(\id)$ and $\Enc(\id')$ of WIBE is simply $\spe.\Kgen$ and $\spe.\Enc$ operated on $S_{\id}$ and $S_{\id'}$ respectively.
\[
    T_{\id}[2i,2i+1]=
    \begin{cases}
        10 & \mbox{if } \id_i=1\\
        01 & \mbox{if } \id_i=0\\
        11 & \mbox{if } \id_i=*\\
    \end{cases}
\]

\paragraph{Wicked IBE.}
Let the identity space is $\ID=\{0,1,*\}^n$. WKD-IBE first converts an $n$-bit identity $\id$ into an $2n$-bit string $T_{\id}$.  Then we define $S_{\id}=\{i\in {[2n]}: T_{\id}[i]=1\}$.
Then $\Kgen(\id)$ and $\Enc(\id')$ of WKD-IBE is simply $\spe.\Kgen$ and $\spe.\Enc$ operated on $S_{\id}$ and $S_{\id'}$ respectively.
\[
    T_{\id}[2i,2i+1]=
    \begin{cases}
        10 & \mbox{if } \id_i=1\\
        01 & \mbox{if } \id_i=0\\
        00 & \mbox{if } \id_i=*\\
    \end{cases}
\]

\paragraph{DNF-ABE.}
Let $F=(C_1\vee C_2\vee \cdots C_{\gamma})$ be a DNF formula where each $C_i$ is conjunction of attributes i.e. $C_i=(I_{i,i_1}\wedge \ldots\wedge I_{i,i_k})$ where $I_{i,j}\in \ID$. 
For every $C_i$, we define $R_i=\{I_{i,i_1}, \ldots, I_{i,i_k}\}\subset \ID$.
Note that, an attribute set $\mathbb{A}$ satisfies the formula $F$ if $\exists k\in[{\gamma}] \text{ s.t. } C_k\subseteq \mathbb{A}\iff \text{ if } \exists k\in[{\gamma}] \text{ s.t. }  \ID\setminus \mathbb{A}\subseteq \ID\setminus C_k$.
Given an attribute set $\mathbb{A}$ and a DNF formula $F=(C_1\vee C_2\vee \cdots C_{\gamma})$, we output $S_Z=\{i\in \ID: T_Z[i]=1\}$ where $T_Z$ is defined as follows:
\[
T_Z[i]=
\begin{cases}
0 & \text{ if } i\in Z,\\
1  & \text{ if } i\notin Z\\
\end{cases}
\]
for all $Z\in\{C_1,\cdots, C_{\gamma},\mathbb{A}\}$.
The $\Kgen$ and $\Enc$ of DNF-ABE is simply $\spe.\Kgen$ and $\spe.\Enc$ running on such set $S_Z$ respectively.

\paragraph{Comparison of Parameters}
We compare our WIBE, WKD-IBE and DNF-ABE with the state of the art next \Cref{tab:WIBE-comparison,tab:WKDIBE-comparison,tab:ABE-comparison}.
In the comparisons next, $\tosz$ denotes the depth of hierarchy, $\ell$ is the bit-length of identity in Wat-IBE~\cite{EC:Waters05}, $\gamma$ is the number of disjunctive clauses in a DNF formula, and $\pr$ denotes the number of pairing operations.

\begin{table*}[htbp]
    \begin{center}
        \begin{scriptsize}
            \begin{tabular}{|c|c|c|c|c|c|c|}
                \hline
                WIBE Schemes & ${|\mpk|}$ & ${|\sk|}$ & ${|\Ct|}$ & $\Dec$ & Security & Assumption \\
                \hline
                BBG-WIBE~\cite{ICALP:ACDMNS06} & $(\tosz+4)\sbG$ & $(\tosz+2)\sbG$ & $(\tosz+2)\sbG+\sbGT$ & $2\pr$ & selective+CPA & $\tosz$-BDHI \\\hline
                Wat-WIBE~\cite{ICALP:ACDMNS06} & $((\tosz+1)\tosz+3)\sbG$ & $(\tosz+1)\sbG$ & $((\tosz+1)\tosz+2)\sbG+\sbGT$ & $(\tosz+1)\pr$ & adaptive+CPA & DBDH \\\hline
                SPE-1-WIBE~\cite{CANS:KMMS17} & $(2n+2)\Go+\GT$ & $\Gt+\Zp$ & $(2n+1)\Go+\GT$ & $1\pr$ & selective+CPA & $q$-BDHI \\\hline
                SPE-2-WIBE~\cite{CANS:KMMS17} & $(2n+1)\Go+2\Gt$ & $\Go+\Gt$ & $2n\Go+\Gt+\GT$ & $2\pr$ & selective+CPA & DBDH \\\hline
                SPE-WIBE \cite{RSA:ChaMuk19} & $(4\tosz+8)\Go+\GT$ & $5\Gt$ & $(2\tosz+2)\Go+\GT+2\tosz\ZZ_p$ & $3\pr$ & adaptive+CPA & SXDH \\\hline\hline
                Our {$\spe$}-WIBE & $(4\tosz+6)\sbG+\sbGT$ & $\sbG$ & $3\sbG+\sbGT$ & $5\pr$ & selective+CCA & SDP \\\hline
            \end{tabular}
        \end{scriptsize}
    \end{center}
    \caption{Comparison of efficient standard model WIBE schemes for $\tosz$-bit identity.} 
    \label{tab:WIBE-comparison}
\end{table*}
\begin{table*}[htbp]
    \begin{center}
        \begin{scriptsize}
            \begin{tabular}{|c|c|c|c|c|c|c|}
                \hline
                WKD-IBE Schemes & $|{\mpk}|$ & $|{\sk}|$ & $|{\Ct}|$ & $\Dec$ & Security & Assumption \\
                \hline
                BBG-WKD~\cite{ESORICS:AbdKilNev07} & $(\tosz+4)\sbG$ & $(\tosz+2)\sbG$ & $2\sbG+\sbGT$ & $2\pr$ & selective+CPA & $\tosz$-BDHE \\\hline
                Wat-WKD~\cite{ESORICS:AbdKilNev07} & $((\tosz+1)\tosz+3)\sbG$ & $(\tosz(\tosz+1)+2)\sbG$ & $2\sbG+\sbGT$ & $2\pr$ & adaptive+CPA & DBDH \\\hline
                SPE-1-WKD~\cite{CANS:KMMS17} & $(2n+2)\Go+\GT$ & $\Gt+\Zp$ & $(n+1)\Go+\GT$ & $1\pr$ & selective+CPA & $q$-BDHI \\\hline
                SPE-2-WKD~\cite{CANS:KMMS17} & $(2n+1)\Go+2\Gt$ & $\Go+\Gt$ & $n\Go+\Gt+\GT$ & $2\pr$ & selective+CPA & DBDH \\\hline
                SPE-WKD \cite{RSA:ChaMuk19} & $(4\tosz+8)\Go+\GT$ & $5\Gt$ & $(\tosz+2)\Go+\GT+\tosz\Zp$ & $3\pr$ & adaptive+CPA & SXDH \\\hline\hline
                Our {$\spe$}-WKD-IBE & $(4\tosz+6)\sbG+\sbGT$ & $\sbG$ & $3\sbG+\sbGT$ & $5\pr$ & selective+CCA & SDP \\\hline            
            \end{tabular}
        \end{scriptsize}
    \end{center}
    \caption{Comparison of efficient standard model WKD-IBE schemes for $\tosz$-bit identity.} 
    \label{tab:WKDIBE-comparison}
\end{table*}
\begin{table*}[htbp]
    \begin{center}
        \begin{scriptsize}
            \begin{tabular}{|c|c|c|c|c|c|c|}
                \hline
                DNF Schemes & $|{\mpk}|$ & $|{\sk}|$ & $|{\Ct}|$ & $\Dec$ & Security & Assumption  \\
                \hline
                SPE-1-DNF~\cite{CANS:KMMS17} & $(\ttsz+2)\Go+\GT$ & $\Gt+\Zp$ & $\gamma((\ttsz+1)\Go+\GT)$ & $1\pr$ & selective+CPA & $q$-BDHI \\\hline
                SPE-2-DNF~\cite{CANS:KMMS17} & $(\ttsz+1)\Go+2\Gt$ & $\Go+\Gt$ & $\gamma(2\ttsz\Go+\Gt+\GT)$ & $2\pr$ & selective+CPA & DBDH \\\hline
                SPE-DNF \cite{RSA:ChaMuk19} & $(2\ttsz+4)\Go+\GT$ & $5\Gt$ & $\gamma((\ttsz+2)\Go+\GT+\ttsz\Zp)$ & $3\pr$ & adaptive+CPA & SXDH \\\hline\hline
                Our {$\spe$}-DNF & $(2\ttsz+4)\sbG+\sbGT$ & $\sbG$ & $\gamma(3\sbG+\sbGT)$ & $5\pr$ & selective+CCA & SDP \\\hline
        \end{tabular}
        \end{scriptsize}
    \end{center}
    \caption{Comparison of efficient standard model DNF schemes. Here size of the universe is $\ttsz$ and $\gamma$ is the number of disjunctive clauses in a DNF formula.} 
    \label{tab:ABE-comparison}
\end{table*}

\section{Conclusion}\label{sec:Conclude}

This work proposes a large universe subset predicate encryption construction.
Our construction achieves constant-size ciphertext and secret key. 
We prove it to be selectively CCA-secure under the standard subgroup decision assumptions.
This work, thereby, closes an open problem mentioned in \cite{RSA:ChaMuk19}.
This work, also creates the first IBE with wildcards with constant ciphertext and keys thereby concluding this long-standing question on a positive note.
It is an interesting open problem to design a subset predicate encryption with similar efficiency as ours but achieves adaptively $\IndCCA$ security.


\appendix
\section{A Deferred Proof}\label{sec:Deferred}

\begin{proof}[Proof of \Cref{lem:SPE-I-k1-k0}]
    Let the reduction $\AB$ is given a problem instance $D=(\sbgo, \sbgr,$ $\sbgot, \sbgtr)$ and the target $T$.

    \begin{itemize}
    \item[Init.] The adversary $\AA$ sends the maximum set size $m$ and the challenger target set $\YSet$ s.t. $|\YSet|=\ell\leq m$.
    \item[Setup.] $\AB$ chooses $a,b,\amsk, \tbmsk\sample \ZN^2$ to compute the public parameters $\sbgo^\bmsk, \{G_i\}_{i\in[m]}$ where $\bmsk=\tbmsk\cdot P_{\YSet}(\amsk)\!\!\mod{N}$, $G_i = \sbgo^{\amsk^i}$. 
    $\AB$ chooses $\{\hkyr[j],\amsk_j\}_{j\in[k-1]}$ $\sample \ZN$ and $R'_{3,i}\sample \sbGr$ for $i\in[m]$ to define $U_i=T^{\amsk^i}\sbgtr^{\sum_{j\in[k-1]} \hkyr[j]\amsk_j^i} R'_{3,i}$.
    $\AB$ outputs the public parameter 
$\mpk=(\sbgo,\sbgo^\bmsk,\sbgo^a,\sbgo^b,\{G_i, U_i\}_{i\in[m]},$ 

\hfill$\sbgo^{\bmsk a},\sbgo^{\bmsk b},e(\sbgo,U_0)^\bmsk, \hfnc_1,\hfnc_2).$

    \item[Phase-I.] On a secret key query on $\XSet_i$, $\AB$ chooses $X'_3\sample \sbGr$ and sets $\sk_i = T^{\frac{\tbmsk\cdot P_{\YSet}(\amsk)}{P_{\XSet_i}(\amsk)}}\cdot \sbgtr^{\sum_{j\in[k-1]} \frac{\hkyr[j] \cdot\tbmsk\cdot P_{\YSet}(\amsk_j)}{P_{\XSet_i}(\amsk_j)}} X'_3.$
    
    \item[Challenge.] When $\AA$ provides the challenge $(\msg_0,\msg_1)$, $\AB$ samples $\bee\sample\{0,1\}$. $\AB$ returns $\Ct=(\hat{\ct},\ct_0, \ct_1,\ct_2)$ to $\AA$ where $\hat{\ct}=\msg_{\bee}\oplus\ssk$ where $\ssk=\hfnc_1(e(\ct_0, U_0))$, $\ct_0=\sbgot$, $\ct_1=\ct_0^{1/\tbmsk}$, $\ct_2=\ct_0^{(a+\tau b)}$ for $\tau=\hfnc_2(\hat{\ct},\ct_0,\ct_1)$ .
    \item[Phase-II.] Same as Phase-I queries.
    \item[Guess.] $\AB$ outputs $1$ if $\AA$'s guess $\bee'$ is same as $\AB$'s choice $\bee$.
    \end{itemize}

    \noindent If $T\in\sbGor$, then the game distribution is same as $\game{{5,k-1,1}}$.
    On the other hand, if $T\in\sbG$, then the game distribution is same as $\game{{5,k,0}}$ as can be seen in~\Cref{eqn:SPE-I-key-i}.
    \end{proof}

\bibliographystyle{./styl_files/IEEEtran}
\bibliography{./cryptobib/abbrev3,./cryptobib/crypto,./references}

\begin{thebibliography}{10}
\providecommand{\url}[1]{#1}
\csname url@samestyle\endcsname
\providecommand{\newblock}{\relax}
\providecommand{\bibinfo}[2]{#2}
\providecommand{\BIBentrySTDinterwordspacing}{\spaceskip=0pt\relax}
\providecommand{\BIBentryALTinterwordstretchfactor}{4}
\providecommand{\BIBentryALTinterwordspacing}{\spaceskip=\fontdimen2\font plus
\BIBentryALTinterwordstretchfactor\fontdimen3\font minus \fontdimen4\font\relax}
\providecommand{\BIBforeignlanguage}[2]{{%
\expandafter\ifx\csname l@#1\endcsname\relax
\typeout{** WARNING: IEEEtran.bst: No hyphenation pattern has been}%
\typeout{** loaded for the language `#1'. Using the pattern for}%
\typeout{** the default language instead.}%
\else
\language=\csname l@#1\endcsname
\fi
#2}}
\providecommand{\BIBdecl}{\relax}
\BIBdecl

\bibitem{C:BonGenWat05}
D.~Boneh, C.~Gentry, and B.~Waters, ``Collusion resistant broadcast encryption with short ciphertexts and private keys,'' in \emph{CRYPTO~2005}, ser. {LNCS}, V.~Shoup, Ed., vol. 3621.\hskip 1em plus 0.5em minus 0.4em\relax Springer, Heidelberg, Aug. 2005, pp. 258--275.

\bibitem{CANS:KMMS17}
J.~Katz, M.~Maffei, G.~Malavolta, and D.~Schr{\"o}der, ``Subset predicate encryption and its applications,'' in \emph{CANS 17}, ser. {LNCS}, S.~Capkun and S.~S.~M. Chow, Eds., vol. 11261.\hskip 1em plus 0.5em minus 0.4em\relax Springer, Heidelberg, Nov.~/~Dec. 2017, pp. 115--134.

\bibitem{ICALP:ACDMNS06}
M.~Abdalla, D.~Catalano, A.~Dent, J.~{Malone-Lee}, G.~Neven, and N.~Smart, ``Identity-based encryption gone wild,'' in \emph{ICALP 2006, Part~II}, ser. {LNCS}, M.~Bugliesi, B.~Preneel, V.~Sassone, and I.~Wegener, Eds., vol. 4052.\hskip 1em plus 0.5em minus 0.4em\relax Springer, Heidelberg, Jul. 2006, pp. 300--311.

\bibitem{ESORICS:AbdKilNev07}
M.~Abdalla, E.~Kiltz, and G.~Neven, ``Generalized key delegation for hierarchical identity-based encryption,'' in \emph{ESORICS~2007}, ser. {LNCS}, J.~Biskup and J.~L{\'o}pez, Eds., vol. 4734.\hskip 1em plus 0.5em minus 0.4em\relax Springer, Heidelberg, Sep. 2007, pp. 139--154.

\bibitem{RSA:ChaMuk19}
S.~Chatterjee and S.~Mukherjee, ``Large universe subset predicate encryption based on static assumption (without random oracle),'' in \emph{CT-RSA~2019}, ser. {LNCS}, M.~Matsui, Ed., vol. 11405.\hskip 1em plus 0.5em minus 0.4em\relax Springer, Heidelberg, Mar. 2019, pp. 62--82.

\bibitem{DSC:TseGao21}
Y.-F. Tseng and S.-J. Gao, ``Efficient subset predicate encryption for internet of things,'' in \emph{2021 IEEE Conference on Dependable and Secure Computing (DSC)}, 2021, pp. 1--2.

\bibitem{CSI:Tseng24}
\BIBentryALTinterwordspacing
Y.-F. Tseng, ``Attribute hiding subset predicate encryption: Quantum-resistant construction with efficient decryption,'' \emph{Computer Standards \& Interfaces}, vol.~88, p. 103796, 2024. [Online]. Available: \url{https://www.sciencedirect.com/science/article/pii/S0920548923000776}
\BIBentrySTDinterwordspacing

\bibitem{PKC:YAHK11}
S.~Yamada, N.~Attrapadung, G.~Hanaoka, and N.~Kunihiro, ``Generic constructions for chosen-ciphertext secure attribute based encryption,'' in \emph{PKC~2011}, ser. {LNCS}, D.~Catalano, N.~Fazio, R.~Gennaro, and A.~Nicolosi, Eds., vol. 6571.\hskip 1em plus 0.5em minus 0.4em\relax Springer, Heidelberg, Mar. 2011, pp. 71--89.

\bibitem{PKC:YASSHK12}
S.~Yamada, N.~Attrapadung, B.~Santoso, J.~C.~N. Schuldt, G.~Hanaoka, and N.~Kunihiro, ``Verifiable predicate encryption and applications to {CCA} security and anonymous predicate authentication,'' in \emph{PKC~2012}, ser. {LNCS}, M.~Fischlin, J.~Buchmann, and M.~Manulis, Eds., vol. 7293.\hskip 1em plus 0.5em minus 0.4em\relax Springer, Heidelberg, May 2012, pp. 243--261.

\bibitem{AAECC:NanPan17}
M.~Nandi and T.~Pandit, ``Verifiability-based conversion from {CPA} to {CCA}-secure predicate encryption,'' \emph{Applicable Algebra in Engineering, Communication and Computing}, vol.~29, no.~1, pp. 77--102, 2018.

\bibitem{EC:KatSahWat08}
J.~Katz, A.~Sahai, and B.~Waters, ``Predicate encryption supporting disjunctions, polynomial equations, and inner products,'' in \emph{EUROCRYPT~2008}, ser. {LNCS}, N.~P. Smart, Ed., vol. 4965.\hskip 1em plus 0.5em minus 0.4em\relax Springer, Heidelberg, Apr. 2008, pp. 146--162.

\bibitem{INDOCRYPT:ChaMukPan17}
S.~Chatterjee, S.~Mukherjee, and T.~Pandit, ``{CCA}-secure predicate encryption from pair encoding in prime order groups: Generic and efficient,'' in \emph{INDOCRYPT~2017}, ser. {LNCS}, A.~Patra and N.~P. Smart, Eds., vol. 10698.\hskip 1em plus 0.5em minus 0.4em\relax Springer, Heidelberg, Dec. 2017, pp. 85--106.

\bibitem{AMC:NanPan22}
\BIBentryALTinterwordspacing
M.~Nandi and T.~Pandit, ``Efficient fully cca-secure predicate encryptions from pair encodings,'' pp. 37--72, 2022. [Online]. Available: \url{/article/id/cd5ba434-d40f-4540-b194-8e0d89176196}
\BIBentrySTDinterwordspacing

\bibitem{EC:Attrapadung14}
N.~Attrapadung, ``Dual system encryption via doubly selective security: Framework, fully secure functional encryption for regular languages, and more,'' in \emph{EUROCRYPT~2014}, ser. {LNCS}, P.~Q. Nguyen and E.~Oswald, Eds., vol. 8441.\hskip 1em plus 0.5em minus 0.4em\relax Springer, Heidelberg, May 2014, pp. 557--577.

\bibitem{SCN:SVNHJ10}
S.~Sedghi, P.~{van Liesdonk}, S.~Nikova, P.~H. Hartel, and W.~Jonker, ``Searching keywords with wildcards on encrypted data,'' in \emph{SCN 10}, ser. {LNCS}, J.~A. Garay and R.~D. Prisco, Eds., vol. 6280.\hskip 1em plus 0.5em minus 0.4em\relax Springer, Heidelberg, Sep. 2010, pp. 138--153.

\bibitem{TCC:Wee16}
H.~Wee, ``D{\'e}j{\`a} {Q}: Encore! {U}n petit {IBE},'' in \emph{TCC~2016-A, Part~II}, ser. {LNCS}, E.~Kushilevitz and T.~Malkin, Eds., vol. 9563.\hskip 1em plus 0.5em minus 0.4em\relax Springer, Heidelberg, Jan. 2016, pp. 237--258.

\bibitem{SCN:GonLibRam18}
J.~Gong, B.~Libert, and S.~C. Ramanna, ``Compact {IBBE} and fuzzy {IBE} from simple assumptions,'' in \emph{SCN 18}, ser. {LNCS}, D.~Catalano and R.~{De Prisco}, Eds., vol. 11035.\hskip 1em plus 0.5em minus 0.4em\relax Springer, Heidelberg, Sep. 2018, pp. 563--582.

\bibitem{C:Waters09}
B.~Waters, ``Dual system encryption: Realizing fully secure {IBE} and {HIBE} under simple assumptions,'' in \emph{CRYPTO~2009}, ser. {LNCS}, S.~Halevi, Ed., vol. 5677.\hskip 1em plus 0.5em minus 0.4em\relax Springer, Heidelberg, Aug. 2009, pp. 619--636.

\bibitem{INDOCRYPT:BlaMuk20}
O.~Blazy and S.~Mukherjee, ``{CCA}-secure {ABE} using tag and pair encoding,'' in \emph{INDOCRYPT~2020}, ser. {LNCS}, K.~Bhargavan, E.~Oswald, and M.~Prabhakaran, Eds., vol. 12578.\hskip 1em plus 0.5em minus 0.4em\relax Springer, Heidelberg, Dec. 2020, pp. 691--714.

\bibitem{C:BonBoy04}
D.~Boneh and X.~Boyen, ``Secure identity based encryption without random oracles,'' in \emph{CRYPTO~2004}, ser. {LNCS}, M.~Franklin, Ed., vol. 3152.\hskip 1em plus 0.5em minus 0.4em\relax Springer, Heidelberg, Aug. 2004, pp. 443--459.

\bibitem{C:CraSho98}
R.~Cramer and V.~Shoup, ``A practical public key cryptosystem provably secure against adaptive chosen ciphertext attack,'' in \emph{CRYPTO'98}, ser. {LNCS}, H.~Krawczyk, Ed., vol. 1462.\hskip 1em plus 0.5em minus 0.4em\relax Springer, Heidelberg, Aug. 1998, pp. 13--25.

\bibitem{TCC:Wee14}
H.~Wee, ``Dual system encryption via predicate encodings,'' in \emph{TCC~2014}, ser. {LNCS}, Y.~Lindell, Ed., vol. 8349.\hskip 1em plus 0.5em minus 0.4em\relax Springer, Heidelberg, Feb. 2014, pp. 616--637.

\bibitem{EC:ChaMei14}
M.~Chase and S.~Meiklejohn, ``D{\'e}j{\`a} {Q}: Using dual systems to revisit q-type assumptions,'' in \emph{EUROCRYPT~2014}, ser. {LNCS}, P.~Q. Nguyen and E.~Oswald, Eds., vol. 8441.\hskip 1em plus 0.5em minus 0.4em\relax Springer, Heidelberg, May 2014, pp. 622--639.

\bibitem{EC:Waters05}
B.~R. Waters, ``Efficient identity-based encryption without random oracles,'' in \emph{EUROCRYPT~2005}, ser. {LNCS}, R.~Cramer, Ed., vol. 3494.\hskip 1em plus 0.5em minus 0.4em\relax Springer, Heidelberg, May 2005, pp. 114--127.

\end{thebibliography}

\end{document}